  \providecommand\BibTeX{{%
    \normalfont B\kern-0.5em{\scshape i\kern-0.25em b}\kern-0.8em\TeX}}}
\setlist[itemize]{leftmargin=5mm}
\setlist[enumerate]{leftmargin=5mm}
\def\COMPILETIKZ{0}
\tikzset{external/system call={pdflatex \tikzexternalcheckshellescape -halt-on-error -interaction=batchmode -jobname "\image" "\texsource"}}}
\newcommand{\tikzinput}[2]{
\ifthenelse{\COMPILETIKZ=1}
{\tikzsetnextfilename{#1}\input{#2}}
{\includegraphics{./tikz/#1.pdf}}
}
\colorlet{colred}{red!22}
\colorlet{colcyan}{cyan!22}
\colorlet{colV}{blue!40}
\colorlet{colBorder}{gray!70}
\tikzset
  {mybox/.style=
    {rectangle,rounded corners,drop shadow,minimum height=1cm,
     minimum width=2cm,align=center,fill=#1,draw=colBorder,line width=1pt
    },
   myarrow/.style=
    {draw=#1,line width=3pt,-stealth,rounded corners
    },
   mylabel/.style={text=#1}
}
\pgfplotsset{
    compat=1.5,
    grid=major,
    x label style={yshift=5pt},
    title style={yshift=-8pt},
    legend style={
        font=\tiny,
        anchor=south east,
        at={(0.98,0.02)},
        opacity=0.7,
        inner xsep=1pt,inner ysep=0pt,
        nodes={inner sep=1pt,text depth=0.15em},
        draw=black!30,
        rounded corners=1pt
    },
    legend image post style={scale=0.8},
    enlarge y limits=0.1,
    enlarge x limits=0.05,
    width=0.92\linewidth,
    height=0.59\linewidth,
    cycle list={red!65,blue!65,violet!65,teal!65,black},
}
\newcommand{\True}{\textbf{true}}
\newcommand{\False}{\textbf{false}}
\newtheorem{problem}{Problem}
\renewcommand{\epsilon}{\varepsilon}
\renewcommand{\dh}{\ensuremath{\mathcal{D(H)}}}
\newcommand{\doo}{\ensuremath{\mathcal{D(O)}}}
\DeclareMathOperator{\id}{id}
\renewcommand{\u}{\ensuremath{\mathcal{U}}}
\newcommand{\e}{\ensuremath{\mathcal{E}}}
\newcommand{\f}{\ensuremath{\mathcal{F}}}
\newcommand{\h}{\ensuremath{\mathcal{H}}}
\newcommand{\s}{\ensuremath{\mathcal{S}}}
\renewcommand{\o}{\ensuremath{\mathcal{O}}}
\renewcommand{\j}{\ensuremath{\mathcal{J}}}
\renewcommand{\k}{\ensuremath{\mathcal{K}}}
\newcommand{\tr}{{\rm tr}} 
\renewcommand{\a}{\ensuremath{\mathcal{A}}}
\newcommand{\ketbra}[2]{| #1 \rangle \langle #2 |}
\begin{document}

\title{Detecting Violations of Differential Privacy for Quantum Algorithms}


\author{Ji Guan}

\orcid{0000-0002-3490-0029}
\affiliation{
  \institution{State Key Laboratory of Computer Science, Institute of Software, Chinese Academy of Sciences}
  \streetaddress{}
  \city{Beijing}
  \state{}
  \country{China}
  \postcode{100190}
}
\email{guanj@ios.ac.cn}

\author{Wang Fang}
\orcid{0000-0001-7628-1185}
\affiliation{
  \institution{State Key Laboratory of Computer Science, Institute of Software, Chinese Academy of Sciences}
  \streetaddress{}
  \city{Beijing}
  \state{}
  \country{China}
  \postcode{100190}
}
\affiliation{
  \institution{University of Chinese Academy of Sciences}
  \streetaddress{}
  \city{Beijing}
  \state{}
  \country{China}
  \postcode{100049}
}
\email{fangw@ios.ac.cn}

\author{Mingyu Huang}
\orcid{0009-0000-3219-1380}
\affiliation{
  \institution{State Key Laboratory of Computer Science, Institute of Software, Chinese Academy of Sciences}
  \streetaddress{}
  \city{Beijing}
  \state{}
  \country{China}
  \postcode{100190}
}
\affiliation{
  \institution{University of Chinese Academy of Sciences}
  \streetaddress{}
  \city{Beijing}
  \state{}
  \country{China}
  \postcode{100049}
}
\email{huangmy@ios.ac.cn}

\author{Mingsheng Ying}
\orcid{0000-0003-4847-702X}
\affiliation{
  \institution{State Key Laboratory of Computer Science, Institute of Software, Chinese Academy of Sciences}
  \streetaddress{}
  \city{Beijing}
  \state{}
  \country{China}
  \postcode{100190}
}
\affiliation{
  \institution{Department of Computer Science and Technology, Tsinghua University}
  \streetaddress{}
  \city{Beijing}
  \state{}
  \country{China}
  \postcode{100084}
}
\email{yingms@ios.ac.cn}

\renewcommand{\shortauthors}{J. Guan et al.}
\begin{abstract}
Quantum algorithms for solving a wide range of practical problems have been proposed in the last ten years, such as data search and analysis, product recommendation, and credit scoring. The concern about privacy and other ethical issues in quantum computing naturally rises up. In this paper, we define a formal framework for 
detecting violations of differential privacy for quantum algorithms. A detection algorithm is developed to verify whether a (noisy) quantum algorithm is differentially private and automatically generates bugging information when the violation of differential privacy is reported. The information consists of a pair of quantum states that violate the privacy, to illustrate the cause of the violation. Our algorithm is equipped with Tensor Networks, a highly efficient data structure, and executed both on TensorFlow Quantum and TorchQuantum which are the quantum extensions of famous machine learning platforms --- TensorFlow and PyTorch, respectively. The effectiveness and efficiency of our algorithm are confirmed by the experimental results of almost all types of quantum algorithms already implemented on realistic quantum computers, including quantum supremacy algorithms (beyond the  capability of classical algorithms), quantum machine learning models, quantum approximate optimization algorithms, and variational quantum eigensolvers  with up to 21 quantum bits. 
\end{abstract}

\begin{CCSXML}
<ccs2012>
   <concept>
       <concept_id>10002978.10002986</concept_id>
       <concept_desc>Security and privacy~Formal methods and theory of security</concept_desc>
       <concept_significance>500</concept_significance>
       </concept>
   <concept>
       <concept_id>10003752.10003753.10003758</concept_id>
       <concept_desc>Theory of computation~Quantum computation theory</concept_desc>
       <concept_significance>500</concept_significance>
       </concept>
 </ccs2012>
\end{CCSXML}

\ccsdesc[500]{Security and privacy~Formal methods and theory of security}
\ccsdesc[500]{Theory of computation~Quantum computation theory}

\keywords{Quantum Algorithm, Quantum Machine Learning, Differential Privacy Verification, Violation Detection, Quantum Noise.}

\maketitle
\pagestyle{empty}
\thispagestyle{empty}
\section{Introduction}
\textbf{Quantum Algorithms and Quantum Machine Learning Models}: A quantum algorithm is an algorithm that runs on a realistic model of quantum computation. The most commonly used quantum computational model is the quantum circuit model. A series of quantum algorithms have been proposed to speed up the classical counterparts to solve fundamental problems, such as Grover's algorithm~\cite{grover1996fast} for searching an unstructured database, Shor's algorithm~\cite{shor1994algorithms} for finding the prime factors of an integer and HHL (Aram Harrow, Avinatan Hassidim, and Seth Lloyd) algorithm~\cite{harrow2009quantum} for solving systems of linear equations. In recent years, motivated by the huge success of classical machine learning models in practical applications, quantum machine learning models (also known as well-trained quantum machine learning algorithms) are proposed to accelerate solving the same classical tasks. Specifically, like the classical models, a bulk of corresponding quantum learning models have been defined and trained on existing quantum hardware or simulators of quantum computation on classical supercomputers. Examples include  quantum support vector machines~\cite{Rebentrost2014},
quantum convolution neural networks~\cite{cong2019quantum}, quantum recurrent neural networks~\cite{bausch2020recurrent}, quantum generative adversarial networks~\cite{dallaire2018quantum} and quantum reinforcement learning networks~\cite{dong2008quantum}. Subsequently, these models have been tested to solve a wide range of real-world problems, such as fraud detection (in transaction monitoring)~\cite{liu2018quantum,di2021quantum}, credit assessments (risk scoring for customers)~\cite{unknown,milne2017optimal} and handwritten digit recognition~\cite{broughton2020tensorflow}. On the other hand, a series of quantum machine learning algorithms without the classical counterparts have also been designed to solve specific problems. For example,  quantum approximate optimization algorithm  (QAOA) is a toy model of quantum annealing and is used to solve problems in graph theory~\cite{farhi2014quantum}, variational quantum eigensolver (VQE) applies classical optimization to minimize the energy expectation of an ansatz state to find the ground state energy of a molecule~\cite{peruzzo2014variational}. Furthermore, based on the famous classical machine learning training platforms --- TensorFlow and Pytorch, two quantum training platforms have been established:  TensorFlow Quantum~\cite{broughton2020tensorflow} and TorchQuantum~\cite{wang2022quantumnas}, respectively. 

The rapid development of quantum hardware  enables those more and more experimental implementations of the algorithms mentioned above on concrete problems have been achieved~\cite{google2020hartree,harrigan2021quantum}. Notably, quantum supremacy (or advantage beyond classical computation) was proved by Google's quantum computer \textit{Sycamore} with 53 noisy superconducting qubits (quantum bits) that can do a  sampling task in 200 seconds, while the same task would cost (arguably) 10,000 years on the largest classical computer~\cite{arute2019quantum}.  
A type of Boson sampling was performed on USTC's quantum computer \textit{Jiuzhang} with 76 noisy photonic qubits 
in 20 seconds that would take 600 million years for a classical computer~\cite{zhong2020quantum}. These experiments demonstrate the power of quantum computers with tens to hundreds of qubits in the current \emph{Noisy Intermediate-Scale Quantum (NISQ)} era where quantum noises  cannot be avoided. Meanwhile, more and more quantum cloud computing platforms (e.g. IBM's Qiskit Runtime and Microsoft's Azure Quantum) are available for public use to implement quantum algorithms on realistic quantum chips.  

\textbf{Differential Privacy: From Classical to Quantum}: Differential privacy has become a de facto standard evaluating an algorithm for protecting the privacy of individuals. It ensures that any individual’s information has very little influence on the output of the algorithm. Based on this intuition, the algorithmic  foundation of differential privacy in classical (machine learning) algorithms has been  established~\cite{dwork2014algorithmic,ji2014differential}. However,  developing algorithms with differentially private guarantees is very subtle and error-prone. Indeed, a large number of published algorithms violate differential privacy. This situation boosts the requirement of a formal framework for verifying the differential privacy of classical algorithms. Various verification techniques have been extended into this context
~\cite{barthe2013verified,barthe2016advanced,barthe2014proving,barthe2016proving,barthe2012probabilistic,barthe2013beyond}. Furthermore,   a counterexample generator for the failure in the verification can be  provided for the debugging purpose~\cite{ding2018detecting}. 


With more and more  applications, the privacy issue of quantum algorithms also rises. Indeed, from the viewpoint of applications, this issue is even more serious than its classical counterpart since it is usually hard for the end users to understand quantum algorithms.   
Inspired by its great success in applications,  the notion of differential privacy has recently been extended to quantum computation, and some fundamental algorithmic results for computing privacy parameters have been obtained~\cite{zhou2017differential,aaronson2019gentle,hirche2022quantum} in terms of different definitions of the similarity between quantum states. However, the verification and violation detecting problem of  differential privacy of quantum algorithms have not been touched in the previous works. 

{\vskip 3pt}

\textbf{Contributions of This Paper}: In this work, we define a formal framework for the verification of differential privacy for quantum algorithms in a principled way. Specifically, our main contributions are as follows: 

\begin{enumerate}\item[(1)] \textit{\textbf{Algorithm}}: An algorithm for  detecting violations of differential privacy for quantum algorithms is developed. More specifically, this algorithm can not only  efficiently check whether or not a (noisy) quantum algorithm is differentially private, but also automatically generate a pair of quantum states when the violation of differential privacy is reported. These two states that break the promising differential privacy   provide us with debugging information.

\item[(2)]\textit{\textbf{Case Studies}}: Our detection algorithm is implemented both on TensorFlow Quantum~\cite{broughton2020tensorflow} and TorchQuantum~\cite{wang2022quantumnas} which are based on famous machine learning platforms --- TensorFlow and PyTorch, respectively. The effectiveness and efficiency of our algorithm are confirmed by the experimental results of almost all types of quantum algorithms already implemented on realistic quantum computers, including quantum supremacy algorithms (beyond the  capability of classical algorithms), quantum machine learning models, quantum approximate optimization algorithms, and variational quantum eigensolver algorithms  with up to 21 qubits.

\item[(3)]\textit{\textbf{Byproducts}}: We show that quantum noises can be used to protect the privacy of quantum algorithms as in  the case of classical algorithms, and establish a composition theorem of quantum differential privacy for handling larger quantum algorithms in a
modular way. 
 \end{enumerate}


\subsection{Related Works and Challenges}\label{sec:related_works}

{\vskip 3pt}
{\bf Detecting Violations for Classical Algorithms:} 
 Detecting the violations of differential privacy for classical (randomized) algorithms has been studied in~\cite{ding2018detecting}.  Their approach is to analyze the (distribution of) outputs of classical algorithms in a statistical way. Specifically, it runs a candidate algorithm many times and uses statistical tests to detect violations of differential privacy. However, such a method  has some limitations: if an algorithm satisfies differential privacy except with an extremely small probability  then it may not detect the violations. To avoid this situation appearing in the quantum world, we introduce a series of linear algebra operations to analyze the output states of quantum algorithms.  In particular, we characterize the verification of differential privacy as inequalities and solve them by computing eigenvalues and eigenvectors of some matrices,  which are indexed by a quantum measurement outcome and represent the converse (dual) implementation of quantum algorithms. As a result, our developed verification algorithm is exact (sound and complete).
 

{\vskip 3pt}

{\bf Differential Privacy for Quantum Circuits:} Quantum differential privacy was first defined in~\cite{zhou2017differential}-\cite{hirche2022quantum} for (noisy) quantum circuits. 
However, the verification and violation detection problems for quantum differential privacy were not addressed there. 

In this paper, we adapt the quantum differential privacy for quantum algorithms rather than quantum circuits, motivated mainly by our target applications. Roughly speaking, a quantum algorithm can be thought of as a quantum circuit together with a quantum measurement at the end to extract the computational outcome (classical information). Accordingly, the privacy for a circuit must be examined for all possible measurements, but the privacy for an algorithm should be defined for a fixed measurement. 
This subtle difference leads to different verification problems and solutions. 
In the case of algorithms, the verification problem can be solved by transferring the impact of 
algorithmic steps on input quantum states to
the given quantum measurement. But it seems that the same idea cannot be applied to the case of circuits because the final measurement is unknown beforehand. On the other hand, 
the counterexample generator of differential privacy constructed in this paper can be used to
 detect differential privacy violations in quantum circuits by appending certain measurements to them.

{\vskip 3pt}
\section{Preliminaries}\label{sec:preliminary}
In this section, for the convenience of the reader, we introduce basic ideas of quantum algorithms in a mathematical way. 

Roughly speaking, a quantum algorithm consists of a quantum circuit and a quantum measurement. The former is for implementing algorithmic instructions; the latter is to extract the classical information from the final state at  the end of the circuit. The computational components in the quantum algorithm can be mathematically  described by two types of matrices: (i) \emph{unitary matrices} for quantum gates and circuits; and (ii) \emph{positive semi-definite matrices} for density operators (quantum states) and (Positive Operator-Valued Measure) quantum measurements. Thus we start with a brief  introduction of these two kinds of matrices in the context of quantum computation.
\subsection{Unitary and Positive Semi-definite Matrices}
Before defining unitary and positive semi-definite matrices, we need to specify the state space we are interested in. Mathematically, a quantum algorithm works on a $2^n$-dimensional Hilbert (linear) space $\h$, where $n$ is the number of \emph{quantum bits (qubits)} (defined in the next section) involved in the algorithm. Thus,  in this paper, all linear algebra operations are based on $\h$. We choose to use  standard quantum mechanical notation instead  of that from linear algebra. This style of notation is known as the \emph{Dirac notation}, and  widely used in the field of quantum computation. For more details, we  refer to textbook~\cite{nielsen2010quantum}. 

First of all, vectors in $\h$ can be represented as the following Dirac notations:

\begin{enumerate}   
	\item $\ket{\psi}$ stands for a $2^n$-dimensional complex unit (normalized) column vector\footnote{$\ket{\psi}$ is a unit column vector if the inner product of $\ket{\psi}$ and itself is one, i.e., $\braket{\psi}{\psi}=1$} in $\h$ labelled with $\psi$;
	\item $\bra{\psi}:=\ket{\psi}^\dagger$ is the Hermitian adjoint
	(complex conjugate and transpose) of $\ket{\psi}$;
	\item $\braket{\psi_1}{\psi_2}:=(\ket{\psi_1}, \ket{\psi_2})$
	is the inner product of $\ket{\psi_1}$ and $\ket{\psi_2}$;
	\item $\ketbra{\psi_1}{\psi_2}$
 is the outer product;
	\item $\ket{\psi_1,\psi_2}:=\ket{\psi_1}\ket{\psi_2}$ is a shorthand of
    the product state $\ket{\psi_1}\otimes\ket{\psi_2}$.
\end{enumerate}

 {\bf Unitary Matrices:} In the ($2^n$-dimensional) Hilbert space $\h$, a unitary matrix $U$ is a  $2^n\times 2^n$ matrix with  $U^\dagger U=UU^\dagger=I_n$, where $U^\dagger=(U^*)^\top$ is the (entry-wise) conjugate transpose of $U$ and  $I_n$ is  the identity matrix on  $\h$.

{\bf Positive Semi-Definite Matrices:} A $2^n\times 2^n$ matrix $M$ is called \emph{positive semi-definite} if for any $\ket{\psi}\in\h$,  $\bra{\psi}M\ket{\psi}\geq 0$. Subsequently, all eigenvalues of $M$ are non-negative. That is, for any unit eigenvector $\ket{\psi}$ of $M$ ( i.e., $M\ket{\psi}=\lambda\ket{\psi}$), we have  $\lambda\geq 0$. 

Some examples of these two matrices with physical meanings will be provided in the next section for a better understanding.


\subsection{Quantum Algorithms}\label{sec_quantum_algorithms}
{\vskip 3pt}
Now we turn to review the setup of quantum algorithms in their most basic form. 
A quantum algorithm is a set of instructions solving a problem (e.g., Shor's algorithm for finding the prime factors of an integer) that can be performed on a quantum computer. Physically, the algorithm is implemented by a quantum circuit that can be executed on quantum hardware. The computational flow of the quantum algorithm is drawn in the following.

\begin{figure*}
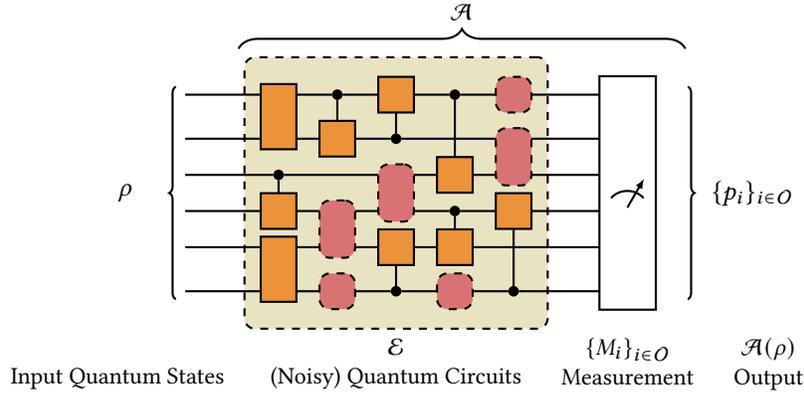

    \centering
    \tikzinput{model_qa}{tikz/model_qa}
    \vskip -10pt
    \caption{The Computational Model of Quantum Algorithms.}\label{fig:models}
\end{figure*}

With the notions introduced in the above subsection, we can explain the above procedures from the left side to the right one.

{\bf Input Quantum States:}  An input can be a \emph{pure quantum state}, which is mathematically modeled as a  complex unit column vector $\ket{\psi}$ in a $2^n$-dimensional Hilbert (linear) space $\h$, where $n$ denotes the number of qubits in $\ket{\psi}$. For example, a state of a qubit is a vector in a $2$-dimensional Hilbert space, written in the Dirac notation as  \begin{equation*}
    \ket{q}=\left(\begin{array}{cc}a\\ b\end{array}\right)=a\ket{0}+b\ket{1} \ {\rm with }\ \ket{0}=\left(\begin{array}{cc}1\\ 0\end{array}\right) \text{ and }\ket{1}=\left(\begin{array}{cc}0\\ 1\end{array}\right),
\end{equation*}
where complex numbers $a$ and $b$ satisfy the  normalization condition $|a|^2+|b|^2=1$. Here, the orthonormal basis $\ket{0}$, $\ket{1}$ of the Hilbert space  corresponds to the digital value $0$, $1$ of a bit  in classical computers, respectively.

On a NISQ hardware, noises are unavoidable, and a pure state $\ket{\psi}$ on $\h$ may collapse into a \emph{mixed state}, represented as an \emph{ensemble} $\{(p_{k},\ket{\psi_{k}})\}_{k}$, meaning that it is in  $\ket{\psi_{k}}$ with probability $p_{k}$. Mathematically, the ensemble can be described by a $2^n\times 2^n$ positive semi-definite matrix:
\[\rho = \sum_{k} p_{k}\ketbra{\psi_{k}}{\psi_{k}}\]
 with unit trace in the $2^n$-dimensional Hilbert (linear) space $\h$, i.e.,  $\tr(\rho) = 1$, where trace $\tr(\rho)$ of $\rho$ is defined as the summation of diagonal elements of $\rho$. 
 We use $\dh$ to denote the set of all (mixed) quantum states in $\h$.

{\vskip 3pt}

{\bf (Noisy) Quantum Circuits:} 
The computational part (without the final measurement) of a quantum algorithm can be described by a quantum circuit.  A quantum circuit $U$ consists of a sequence (product) of \emph{quantum logic gates} $U_i$, i.e., $U= U_d\cdots U_1$ ( See the orange boxes of the quantum circuit in Fig.~\ref{fig:models}). Here $d$ is the depth of the circuit $U$, and each $U_i$ is mathematically modeled by a unitary matrix. For an input $n$-qubit state $\rho$, the output of the circuit is a quantum state of the same size: \begin{equation}\label{eq:noiseless_circuit}
    \rho'=U\rho U^\dagger.
\end{equation}
\begin{example}
A set of typical quantum logic gates used in this paper are listed in the following.
\begin{itemize}
    \item [(I)] 1-qubit (parameterized) logic gates ($2\times 2$ unitary matrices):
    \begin{gather*}
        \begin{aligned}
        X &= \begin{pmatrix}
            0 & 1 \\ 1 & 0
        \end{pmatrix} & 
        Y &= \begin{pmatrix}
            0 & -i \\ i & 0
        \end{pmatrix} &
        Z &= \begin{pmatrix}
            1 & 0 \\ 0 & -1
        \end{pmatrix} \\
        H &= \frac{1}{\sqrt{2}}\begin{pmatrix}
            1 & 1 \\ 1 & -1
        \end{pmatrix} &
        S &= \begin{pmatrix}
            1 & 0 \\ 0 & i
        \end{pmatrix} &
        T &= \begin{pmatrix}
            1 & 0 \\ 0 & e^{i\pi/4}
        \end{pmatrix}.
    \end{aligned}
    \end{gather*}
      \item [(II)] 1-qubit  rotation gates that are  rotation operators along $x,y,z$-axis by angle $\theta$, respectively:
    \begin{gather*}
    \begin{aligned}
        R_x(\theta) &= e^{-i\theta X/2} = \cos\frac{\theta}{2}I-i\sin\frac{\theta}{2}X = \begin{pmatrix}
            \cos\frac{\theta}{2} & -i\sin\frac{\theta}{2} \\
            -i\sin\frac{\theta}{2} & \cos\frac{\theta}{2}
        \end{pmatrix} \\
        R_y(\theta) &= e^{-i\theta Y/2} = \cos\frac{\theta}{2}I-i\sin\frac{\theta}{2}Y = \begin{pmatrix}
            \cos\frac{\theta}{2} & -\sin\frac{\theta}{2} \\
            \sin\frac{\theta}{2} & \cos\frac{\theta}{2}
        \end{pmatrix} \\
        R_z(\theta) &= e^{-i\theta Z/2} = \cos\frac{\theta}{2}I-i\sin\frac{\theta}{2}Z = \begin{pmatrix}
            e^{-i\theta/2} & 0 \\ 0 & e^{i\theta/2}
        \end{pmatrix}.
    \end{aligned}
    \end{gather*}
    Rotation gates $R_x(\theta),R_y(\theta),R_z(\theta)$ are widely used to encode classical data into quantum states and also construct quantum machine learning models (parameterized quantum circuits). These will be detailed in the later discussion.

\item [(III)] 2-qubit Controlled-U gates ($4\times 4$ unitary matrices): For any 1-qubit logic gate $U$, we can get a 2-qubit logic gate --- controlled-$U$ (CU) gate, applying $U$ on the second qubit (the target qubit) if and only if the first qubit (the control qubit) is $\ket{1}$. See the following instances:
\begin{enumerate}
    \item CNOT: CX gate is also known as controlled NOT (CNOT) gate and has a special circuit representation:
    \[
    \text{CX}
    =\begin{quantikz}[row sep=2.2mm, column sep=2.3mm,]
    & \ctrl{1} & \qw\\
      &\targ{} & \qw
    \end{quantikz}
    =\begin{pmatrix}1&0&0&0\\
    0&1&0&0\\
    0&0&0&1\\
    0&0&1&0
    \end{pmatrix}.
    \]

    \item CZ gate:
    \[
    \text{CZ}
    =\begin{quantikz}[row sep=2.2mm, column sep=2.3mm,]
    & \ctrl{1} & \qw\\
      &\control\qw & \qw
    \end{quantikz}
    =\begin{pmatrix}1&0&0&0\\
    0&1&0&0\\
    0&0&1&0\\
    0&0&0&-1
    \end{pmatrix}
    \]
    \item Controlled parameterized gates: For example, the controlled Pauli X rotation gate with rotation angle $\theta$ is:
    \[
    \begin{quantikz}[row sep=2.2mm, column sep=2.3mm,]
    & \ctrl{1} & \qw\\
      &\gate{R_x(\theta)} & \qw
    \end{quantikz}
    =\begin{pmatrix}
        1&0&0&0\\
        0&1&0&0\\
        0&0&\cos \frac{\theta}{2} & -i \sin \frac{\theta}{2} \\
        0&0&-i \sin \frac{\theta}{2} & \cos \frac{\theta}{2}
        \end{pmatrix}
    \]
    
\end{enumerate}
\end{itemize}
\end{example}

\begin{figure}[!ht]
\centering
\subcaptionbox{A simple quantum neural network to perform MNIST image classification task in TorchQuantum's tutorial.\label{examples:a}}[0.9\linewidth]{
        \scalebox{0.95}{\tikzinput{qnn}{tikz/qnn}}
}

\subcaptionbox{A quantum supremacy algorithm with a $2 \times 2$ qubits layout with four layers.\label{examples:b}}[0.9\linewidth]{
        \scalebox{0.95}{\begin{quantikz}[row sep=2.2mm, column sep=2.3mm,]
          & \gate{H} & \gate{T}\gategroup[4,steps=6,style={draw=white,dashed,
             rounded corners, inner xsep=0pt},
             background]{}     &\ctrl{1}\qw  & \qw         &\gate{T}       &\ctrl{2}\qw  &\qw            &\qw     &\meter{}\\
    & \gate{H} & \gate{T}     &\control\qw  &\ctrl{2}\qw  &\qw            & \qw         &\gate{T}       &\qw  &\meter{}\\
    & \gate{H} & \ctrl{1}\qw  &\gate{T}     &\qw          &\gate{SX} &\control\qw  &\qw            &\qw  &\meter{}\\
    & \gate{H} & \control\qw  &\gate{T}     &\control\qw  &\qw            & \qw         &\gate{T}       &\qw  &\meter{}        \end{quantikz}}
    }
    \vskip -10pt
    \caption{Examples of Quantum Machine Learning and Supremacy Algorithms}\label{examples}
\end{figure}
In quantum circuits, each quantum  gate $U_i$ only non-trivially operates on one or two qubits.
For example, if $U_i$ represents a Hadamard gate  on the first qubit, then $U_i=H\otimes I_{n-1}$, where   $I_{n-1}$ is a $2^{n-1}\times 2^{n-1}$ identity matrix applied on the rest $n-1$ qubits. See the gates in Figure~\ref{examples}.

 In the current NISQ era, a (noiseless)  quantum circuit $U$ can only have a noisy implementation modeled by a linear mapping $\e$ from $\dh$ to $\dh$ satisfying the following two conditions: 
\begin{itemize}
    \item $\e$ is trace-preserving: $\tr(\e(\rho)) = \tr(\rho)$ for all  $\rho\in\dh$; 
    \item $\e$ is completely positive: for any Hilbert space $\h'$, the trivially extended operator $\id_{\h'} \otimes \e$ maps density operators  to density operators on $\h' \otimes \h$, where  $\id_{\h'}$ is the identity map on $\h'$: $\id_{\h'}(\rho)=\rho$ for all $\rho\in\mathcal{D(H')}$.
\end{itemize}
Such a mapping $\e$ is called a \emph{super-operator} in the field of quantum computing and  admits a \emph{Kraus matrix form}~\cite{nielsen2010quantum}: there exists a finite set $\{E_k\}_{k\in\k}$ of matrices on $\h$ such that 
\[
 \e(\rho)=\sum_{k\in\k}E_k\rho E_k^\dagger \quad \textrm{ with  } \sum_{k\in\k}E_k^\dagger E_k=I_n,\]
where $\{E_k\}_{k\in\k}$ is called \emph{Kraus matrices} of $\e$. In this case, $\e$ is often represented as $\e=\{E_k\}_{k\in\k}$. Thus, for an input state $\rho$ fed into the noisy quantum circuit $\e$, the output state is:
\begin{equation}\label{eq:noisy_circuit}
    \rho'=\e(\rho).
\end{equation}
If $\e$ degenerates to a unitary matrix $U$, i.e., $\e=\{U\}$, then the above equation (evolution) is reduced to the noiseless case in Eq.~(\ref{eq:noiseless_circuit}). Briefly, we write such $\e=\{U\}$ as $\u=\{U\}$ representing noiseless quantum circuit $U$. 

Similarly to a noiseless quantum circuit $U$, a noisy quantum circuit $\e$ also consists of a sequence (mapping composition) of quantum logic (noisy) gates $\{\e_{i}\}$, i.e., $\e=\e_d\circ\cdots \circ\e_1$, where each $\e_{i}$ is either a noiseless quantum logic gate or a noisy one (e.g., the  red dashed boxes of the noisy quantum circuit in Fig.~\ref{fig:models}). See the following examples of quantum noisy logic gates in a mathematical way.
\begin{example}\label{Exa:noise}
Let us consider the following noise forming of a 1-qubit gate $U$:
\[\e_{U,p}(\rho)=(1-p)\rho+ pU\rho U^\dagger, \quad\forall \rho\in\dh\]
 where $0\leq p\leq 1$ is a probability measuring the noisy level (effect) and  $U$ is a unitary matrix. Then $\e_{U,p}$ consists of Kraus matrices $\{\sqrt{1-p}I, \sqrt{p}U\}$. Such $\e_{U,p}$ can be used to model several typical 1-qubit noises,  depending on the choice of $U$: $U=X$ for bit flip, $U=Z$ for phase flip and $U=Y=iXZ$ for bit-phase flip~\cite[Section 8.3]{nielsen2010quantum}. The depolarizing noise   combines these three noises. It is represented by 
 \[\e_{D,p}=\{\sqrt{1-p}I,\sqrt{\frac{p}{3}}X,\sqrt{\frac{p}{3}}Y,\sqrt{\frac{p}{3}}Z\},\] or equivalently
\[\e_{D,p}(\rho)=(1-p)\rho+\frac{p}{3}(X\rho X+Y\rho Y+Z\rho Z), \quad \forall \rho\in\dh.\]
\end{example}

{\bf Quantum Measurement:} At the end of each quantum algorithm, a \emph{quantum measurement} is set to extract the computational outcome (classical information). Such information is a probability distribution over the possible outcomes of the measurement. Mathematically, a quantum measurement is modeled by a set $\{M_{k}\}_{k\in\o}$ of positive semi-definite matrices on its state (Hilbert)  space $\h$ with $\sum_{k} M_{k}=I$, where $\o$ is a finite set of the measurement outcomes.  This observing process is probabilistic: if the output of the quantum circuit before the measurement is quantum  state $\rho$, then a measurement outcome $k$ is obtained with probability  
\begin{equation}\label{Eq:measure_probability}
    p_{k}=\tr(M_{k}\rho).
\end{equation}
Such measurements are known as \emph{Positive Operator-Valued Measures} and are widely used to describe the probabilities of outcomes without concerning the post-measurement quantum states (note that after the measurement, the state will be collapsed (changed), depending on the measurement outcome $k$, which is fundamentally different from the classical computation.)

By summarizing the above ideas, we obtain a general model of quantum algorithms as depicted in Fig.~\ref{fig:models}:
\begin{definition}\label{def:decision_model}
  A quantum algorithm  $\a=(\e,\{M_k\}_{k\in \o})$ is a randomized mapping $\a: \dh\rightarrow\doo$ defined by 
  \[\a(\rho)=\left\{\tr(M_k\e(\rho))\right\}_{k\in\o}\quad \forall \rho\in\dh,\]
  where:\begin{enumerate}\item $\e$ is a super-operator on Hilbert space $\h$ representing a noisy quantum circuit; 
  \item $\{M_{k}\}_{k\in \o}$ is a quantum measurement on $\h$  with $\o$ being the set of measurement outcomes (classical information);
  \item $\doo$ stands for the set of probability distributions over $\o$.\end{enumerate}

  In particular, if $\e$ represents a noiseless quantum circuit $U$ written as $\u=\{U\}$, then we call $\a=(\u,\{M_k\}_{k\in \o})$ a noiseless quantum algorithm. 
\end{definition}

According to the above definition, a quantum algorithm $\a=(\e,\{M_k\}_{k\in \o})$ is a randomized mapping, and thus we can estimate not only the distribution  $\{\tr(M_k\e(\rho))\}_{k\in\o}$ but also the summation  $\sum_{k\in\s}\{\tr(M_k\e(\rho))\}_{k\in\o}$ for any subset $\s\subseteq \o$ in a statistical way. This observation is essential in defining differential privacy for quantum algorithms in the next section. 

{\bf Quantum Encoding:} To make quantum algorithms useful for solving practical classical problems, the first step is to encode classical data into quantum states. There are multiple encoding methods, but \emph{amplitude encoding} and \emph{angle encoding} are two of the most widely used.
\begin{itemize}
\item {\bf Amplitude encoding} represents a vector \textbf{$\bar{v}$} as a quantum state \textbf{$|\bar{v}\rangle$}, using the amplitudes of the computational basis states $|i\rangle$:
$$\bar{v}=(v_1,v_2,\ldots,v_N) \rightarrow  |\bar{v}\rangle = \sum_{i=1}^N \frac{v_i}{\|\bar{v}\|} |i\rangle$$
where \textbf{$\|\bar{v}\|$} normalizes the state. This encoding uses only \textbf{$\log_2 N$ qubits} to represent an $N$-dimensional vector. However, preparing the state $|\bar{v}\rangle$ requires a deep, complex circuit beyond the current NISQ hardwares.
\item {\bf Angle encoding} encodes a vector \textbf{$\bar{v}$} by rotating each qubit by an angle corresponding to one element of \textbf{$\bar{v}$}:
$$\bar{v}=(v_1,v_2,\ldots,v_n) \rightarrow \ket{\bar{v}} = \bigotimes_{j=1}^n R(v_j)\ket{0}$$
where $R(v_j)$ rotates qubit $j$ by angle $v_j$ along some axis, i.e.,  $R$ can be one of $R_x,R_y,R_z$. This encoding uses $n$ qubits for an $n$-dimensional vector but only requires simple 1-qubit rotation gates. As an example, encoding \(\bar{v}=(\pi,\pi,\pi)\) via \(R_y\) rotations yields \(\ket{\bar{v}}=\ket{1,1,1}=\ket{1}\otimes\ket{1}\otimes\ket{1}\).
A key advantage of angle encoding is its parallelizability. Each qubit undergoes a rotation gate simultaneously, enabling encoding in constant time as shown in the following. This makes angle encoding well-suited for the current NISQ devices. Therefore, angle encoding is commonly used in the experimental implementation of quantum algorithms on existing quantum computers for solving classical computational tasks.
\end{itemize}

\[
\qquad\scalebox{0.8}{\begin{quantikz}[row sep=8pt]
    \lstick{$\ket{0}$} & \gate[style={minimum width=1.5cm}]{R(v_1)} & \qw \ R(v_1)\ket{0} &[-5pt] \rstick[4]{$\bigotimes_{j=1}^n R(v_j)\ket{0}$}\\
    \lstick{$\ket{0}$} & \gate[style={minimum width=1.5cm}]{R(v_2)} & \qw R(v_2)\ket{0} & \\
    \lstick{$\vdots$} & \vdots & \vdots & \\
    \lstick{$\ket{0}$} & \gate[style={minimum width=1.5cm}]{R(v_{n})} & \qw\ R(v_n)\ket{0} & \\
\end{quantikz}}
\]

With the above encoding methods for pure state $\ket{\bar{v}}$, we can simply obtain a mixed state to carry the classical data $\bar{v}$:
$$\rho_{\bar{v}}=\ketbra{\bar{v}}{\bar{v}}.$$

In this paper, we consider the differential privacy of quantum algorithms on NISQ computers. As such, all of our experiments in the Evaluation section (Section~\ref{Sec:evaluation}) use angle encoding to encode classical data, including credit records, public adult income dataset, and transactions dataset. 

\section{Formalizing Differential Privacy}\label{Sec:definition}
In this section, we introduce the differential privacy for quantum algorithms and clarify the relationship between it and the differential privacy for quantum circuits defined in~\cite{zhou2017differential}. For the convenience of the reader, we put all proofs of theoretical results in the appendix.

\begin{figure*}[htbp]    \includegraphics[width=.8\textwidth]{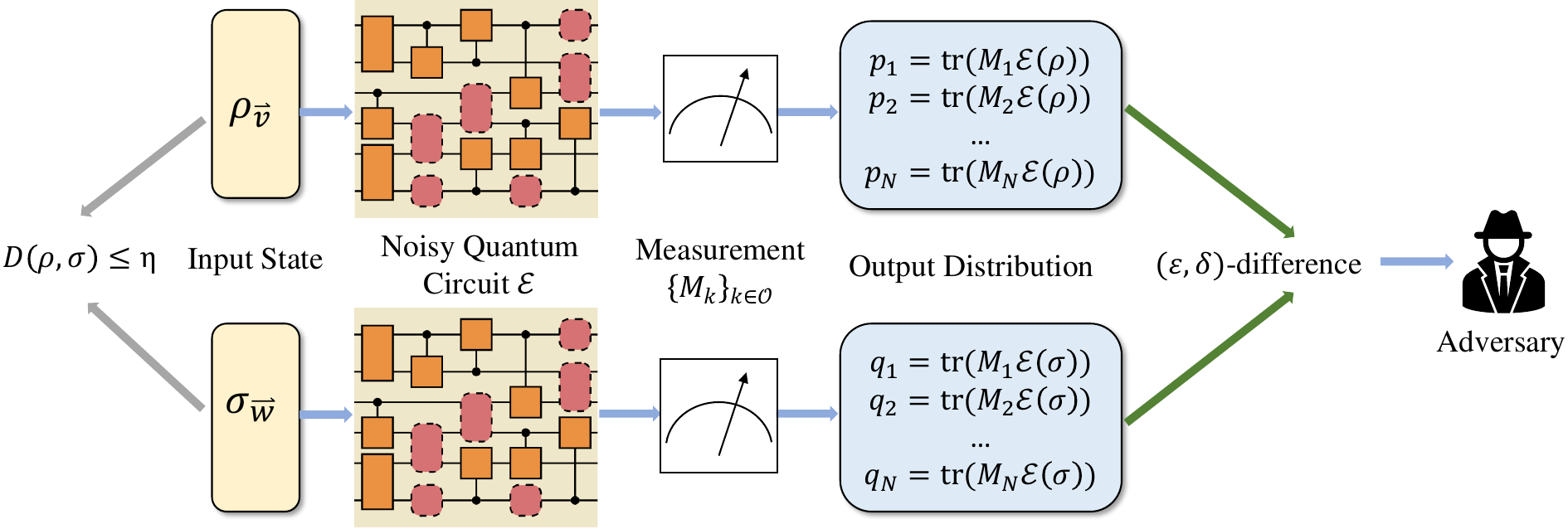}
    \caption{Quantum Differential Privacy}
    \label{pic:QDP}
\end{figure*}

Let us start by defining the differential privacy for quantum algorithms: 
\begin{definition}[Differential Privacy for Quantum Algorithms]\label{def:DP}
  Suppose we are given a quantum algorithm  $\a=(\e,\{M_k\}_{k\in\o})$ on a Hilbert space $\h$, a distance metric $D(\cdot,\cdot)$ on $\dh$, and three  small enough threshold values $ \epsilon,\delta,\eta\geq 0$. Then $\a$ is said to be  $(\epsilon,\delta)$-differentially private within $\eta$ if for any quantum states $\rho, \sigma\in\dh$ with $D(\rho,\sigma)\leq \eta$, and for any subset $\s\subseteq\o$, we have 
  \begin{equation}\label{def-DP-eq} \sum_{k\in\s}\tr(M_{k}\e(\rho))\leq \exp(\epsilon)\sum_{k\in\s}\tr(M_{k}\e(\sigma))+\delta.\end{equation}
  In particular, if $\delta=0$, we say that $\a$ is $\epsilon$-differentially private within $\eta$.
\end{definition}

 The above definition is essentially a quantum generalization of differential privacy for randomized algorithms~\cite{dwork2014algorithmic}. Thus, it shares the intuition of differential privacy discussed in~\cite{dwork2014algorithmic}: an algorithm must behave similarly on similar input states (considered as neighbors in the state space). In the quantum case, we have: 
 \begin{enumerate}
    \item $\eta$ defines the (noisy) neighboring relation between the two  input states $\rho$ and $\sigma$, i.e., $D(\rho,\sigma)\leq \eta$;
     \item $\epsilon$ and $\delta$ through Eq.(\ref{def-DP-eq}) guarantee the similarity between the outputs of $\sum_{k\in\s}\tr(M_{k}\e(\rho))$ and $\sum_{k\in\s}\tr(M_{k}\e(\sigma))$;
     \item  Since a quantum algorithm is a randomized function, it is reasonable to consider the probability $\sum_{k\in\s}\tr(M_{k}\e(\rho))$ that the output is within a subset $\s\subseteq \o$ rather than an exact value of $\tr(M_{k}\e(\rho))$. The arbitrariness of $\s\subseteq \o$  in Eq.(\ref{def-DP-eq}) ensures the differential privacy in randomized functions as the same as in the classical case~\cite{dwork2014algorithmic}. 
 \end{enumerate}

Consequently, quantum differential privacy ensures that the indistinguishabilities of any neighboring quantum states are kept by quantum algorithms. Specifically, as shown in Fig.~\ref{pic:QDP}, an adversary is hard to determine whether the input state of the algorithm was indeed $\rho$ or a neighboring state $\sigma$ such that the information revealed in the $(\epsilon,\delta)$-difference between $\rho$ and $\sigma$ in Eq.~(\ref{def-DP-eq}) cannot be easily inferred by observing the output measurement distribution of the algorithm. Furthermore, quantum encoding allows quantum states to encode classical data, so $\rho$ and $\sigma$ can be regarded as $\rho_{\bar{v}}$ and $\sigma_{\bar{w}}$ which encodes classical vectors $\bar{v}$ and $\bar{w}$. Thus the distance bound $\eta$ between $\rho_{\bar{v}}$ and $\sigma_{\bar{w}}$ can be used to represent the single element difference of classical data $\bar{v}$ and $\bar{w}$. Thus classical neighboring relation can be preserved by the quantum counterpart. Therefore, quantum differential privacy can be used as a proxy to ensure the original motivating privacy that the presence or absence of any individual data record will not significantly affect the outcome of an analysis. A concrete example is provided to detail this in the later of this section. Furthermore, this idea will be utilized in our case studies in Section~\ref{Sec:evaluation} to demonstrate how quantum noise can enhance the privacy of encoded classical data.

 It is easy to see that when considering noiseless  trivial quantum circuits (i.e., $\e = \text{id}_{\mathcal{H}}$, the identity map on $\mathcal{H}$), the above setting  degenerates to Aaronson and Rothblum's framework~\cite{aaronson2019gentle} where an elegant connection between quantum differential privacy and gentle measurements was established. In this paper, we consider a more general class of measurements, and a  connection between quantum measurements and the verification of  quantum differential privacy under quantum noise is revealed.

 By Definition~\ref{def:DP}, if a quantum algorithm $\a=(\e,\{M_k\}_{k\in\o})$ is not $(\epsilon,\delta)$-differentially private, then there exists at least one pair of quantum states $(\rho,\sigma)$ with the distance of them being within $\eta$, i.e.,  $D(\rho,\sigma)\leq \eta$, and a subset $\s\subseteq \o$ such that 
\begin{equation}
\sum_{k\in\s}\tr(M_{k}\e(\rho))> \exp(\epsilon)\sum_{k\in\s}\tr(M_{k}\e(\sigma))+\delta.\end{equation}
Such a pair of quantum states $(\rho,\sigma)$ is called a \emph{$(\epsilon,\delta)$-differentially private counterexample} of $\a$ within $\eta$. 

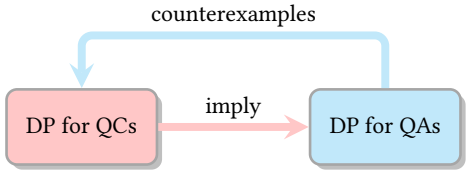
\begin{figure}[ht]
\centering
\begin{tikzpicture}[align=center,node distance=2cm]
  \node[mybox=colred] (QC) {DP for QCs};
  \node[mybox=colcyan,right=of QC] (QA) {DP for QAs};
  \draw[myarrow=colred] (QC) -- (QA);
  \draw[myarrow=colcyan] (QA.north) -- +(0,0.7) coordinate (VD1)
    -- (VD1-|QC) coordinate (VD2) -- (QC);
  \path (QC) -- node[mylabel=black,above]{imply} (QA);
  \path (VD1) -- node[mylabel=black,above]{counterexamples} (VD2);
\end{tikzpicture}
    \captionof{figure}{The relationship between the differential privacy (DP) for quantum circuits (QCs) and quantum algorithms (QAs)}
    \label{fig:workflow}
\end{figure}
As said before, the notion of differential privacy for (noisy) quantum circuits has been defined in the previous works~\cite{zhou2017differential,hirche2022quantum}. Using Definition~\ref{def:DP}, it can be reformulated as the following:

\begin{definition}[Differential Privacy for Quantum Circuits]\label{def:DP_superoperator}
  Suppose we are given a (noisy) quantum circuit $\e$ on a Hilbert space $\h$, a  distance metric $D(\cdot,\cdot)$ on $\dh$, and three small enough threshold values $\epsilon,\delta,\eta\geq 0$. Then $\e$ is said to be  $(\epsilon,\delta)$-differentially private within $\eta$  if for any quantum measurement $\{M_k\}_{k\in\o}$, the algorithm obtained from $\e$ by adding the measurement at the end, i.e. $(\e,\{M_k\}_{k\in\o})$, is $(\epsilon,\delta)$-differentially private within $\eta$.
\end{definition}


The relationship between differential privacy for quantum algorithms and quantum circuits can be visualized as Fig~\ref{fig:workflow}. More precisely, the differential privacy of a circuit $\e$ implies that of algorithm $(\e, \{M_k\}_{k\in\o})$ for any measurement $
\{M_k\}_{k\in\o}$. Conversely, for every measurement $\{M_k\}_{k\in\o}$, a counterexample of algorithm $(\e, \{M_k\}_{k\in\o})$ is also a counterexample of circuit $\e$. 


{\vskip 3pt}
\textit{\textbf{Choice of Distances}}: The reader should have noticed that the above definition of   differential privacy for quantum algorithms is similar to that for the classical datasets. But an intrinsic distinctness between them comes from different notions of neighboring relation. In the classical case, the state space of classical bits is discrete and two datasets are considered as  neighbors if they differ on a single bit. In the quantum case, two different neighboring relations for defining  quantum differential privacy have been adopted in the literature:
\begin{enumerate}
    \item As the state space of quantum bits is a continuum and thus uncountably infinite, a common way in the field of quantum computing  to define a neighboring relation is to introduce a distance $D$ that measures the closeness of two quantum states and set a bound $\eta$ on the distance. 
In~\cite{zhou2017differential} and several more recent papers~\cite{angrisani2022differential,hirche2022quantum}, trace distance is used to measure closeness (neighborhood). Trace distance is essentially a generalization of the total variation distance between probability distributions. It has been widely used by the quantum computation and quantum information community 
~\cite[Section 9.2]
{nielsen2010quantum}. Formally,
    for two quantum states $\rho,\sigma\in \dh$,
    \[D(\rho,\sigma)=\frac{1}{2}\tr(|\rho-\sigma|),\]
    where $|\rho-\sigma|=\Delta_++\Delta_-$ if $\rho-\sigma=\Delta_+-\Delta_-$ with $\tr(\Delta_+\Delta_{-})=0$ and $\Delta_{\pm}$ being positive semi-definite matrix.

{\vskip 3pt}
    \item [(2)]	In~\cite{aaronson2019gentle}, a way more similar to the setting of the classical database is introduced, where the neighboring relationship of two quantum states $\rho$ and $\sigma$ means that it’s possible to reach either $\sigma$ from $\rho$, or $\rho$ from $\sigma$, by performing a quantum operation (super-operator) on a single quantum bit only.
\end{enumerate}

Let us consider a simple  example about 2-qubit quantum states to further clarify  the difference between the above two approaches to defining quantum differential privacy. This example shows  that the definition through approach (1) is more suitable for the setting of \textit{noisy} quantum algorithms.
  \begin{example}\label{example_definition} 
        Given a 2-qubit state $\ket{0,1}$ (its mixed state form is $\rho=\ketbra{0}{0}\otimes \ketbra{1}{1}$). Under the bit-flip noise with probability $p_1$ (defined in Example~\ref{Exa:noise}) on the first qubit, the state $\rho$ will be changed to  
        \begin{equation*}
           \begin{aligned}
            \sigma_1&=\e_{X,p_1}(\ketbra{0}{0})\otimes \ketbra{1}{1}\\
            &=[(1-p_{1})\ketbra{0}{0}+p_1\ketbra{1}{1}]\otimes \ketbra{1}{1}.
           \end{aligned}
        \end{equation*}
According to  the above approach (2) 
$\rho$ and $\sigma_1$ are neighboring. They are also neighboring according to approach (1) if $p_1\leq \eta$.

However, the quantum noise cannot ideally be restricted to  a single qubit, but randomly effects on other qubits in the   system. In this case, if the second qubit of $\rho$ is simultaneously noisy under bit-flip with probability $p_2$, then the state $\rho$ will be further transferred to the following state:
        \begin{equation*}
           \begin{aligned}
            \sigma_2&=\e_{X,p_1}(\ketbra{0}{0})\otimes\e_{X,p_2} (\ketbra{1}{1})\\
            &=[(1-p_{1})\ketbra{0}{0}+p_1\ketbra{1}{1}]\otimes [(1-p_2)\ketbra{1}{1}+p_2\ketbra{0}{0}].
           \end{aligned}
        \end{equation*}
        It is easy to see that $\rho$ and $\sigma_{2}$ are not neighbors  under approach (2) even if the probability $p_2$ is extremely small, while they are neighboring under approach (1) provided $p_1+p_2-p_1p_2\leq \eta$.
    \end{example}

Targeting  the applications of detecting violations of differential privacy of quantum algorithms in  the current NISQ era where noises are unavoidable, we  follow  approach (1) in this paper. In particular,  $D(\cdot,\cdot)$ in Definition~\ref{def:DP} is chosen to be the trace distance, which is one of the more popular distances in  the quantum computation and information literature.   

\textit{\textbf{ Remark.}} As the trace distance of any two quantum states is within 1, the quantum differential privacy through approach  (1) 
implies that 
through approach (2) with $\eta=1$. However, the opposite direction does not hold. 

Furthermore, trace distance can maintain the neighboring relation between classical data vectors that differ by a single element. This allows quantum differential privacy guarantees on quantum states to be transferred back to guarantees on the privacy of the encoded classical data.
\begin{example}\label{Exa:angle_encoding}
    Consider two neighboring classical data vectors $\bar{v}$ and $\bar{w}$ that differ only in the $j^{th}$ element. Using angle encoding, they can be encoded into quantum states $\rho$ and $\sigma$, respectively. It can then be computed that:
\[D(\rho,\sigma) = \sqrt{1-\bra{0}R_j(v_j-w_j)\ket{0}\bra{0}R_j(w_j-v_j)\ket{0}}\]
where $R_j$ is the rotation gate used to encode the $j^{th}$ element of $\bar{v}$ and $\bar{w}$.
In particular, for binary vectors $\bar{v}, \bar{w} \in \{0,1\}^n$, the trace distance between the corresponding quantum states $\rho$ and $\sigma$ satisfies $D(\rho,\sigma) \leq \sin\frac{1}{2}$. This upper bound is achieved when $R_j$ is chosen to be rotations about the $x$- or $y$-axis, i.e., $R_x$ or $R_y$.  Therefore, by setting $\eta=\sin{\frac{1}{2}}$ in the definition of quantum differential privacy (Definition~\ref{def:DP}), the neighboring relation in classical data can be transferred to a relation between quantum states under trace distance. In other words, if two classical data vectors are considered neighbors because they differ by one element, then their angle-encoded quantum state representations will have trace distance $\sin{\frac{1}{2}}$. Subsequently, quantum differential privacy guarantees the privacy of the encoded classical data when used in quantum algorithms. By ensuring the quantum states satisfy differential privacy, the privacy of the original classical data is also ensured. 
\end{example}


{\vskip 3pt}

{\bf Noisy Post-processing:}
Similarly to the case of the classical computing~\cite{dwork2014algorithmic} and noisy quantum circuits~\cite{zhou2017differential}, the differential privacy for noiseless quantum algorithms is immune to noisy post-processing:  without additional knowledge about a noiseless quantum algorithm, any quantum noise applied on the output states  of a noiseless quantum algorithm does not increase privacy loss.
\begin{theorem}\label{Thm:post-processing}
    Let $\a=(\u,\{M_i\}_{i\in\o})$ be a noiseless quantum algorithm. Then for any (unknown) quantum noise represented by a super-operator $\f$, if $\a$ is $(\epsilon,\delta)$-differentially private, then $(\f\circ \u, \{M_{i}\}_{i\in \o})$ is also $(\epsilon,\delta)$-differentially private.  
\end{theorem}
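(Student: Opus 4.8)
The plan is to reduce the claimed inequality for the post-processed algorithm $(\f\circ\u,\{M_i\}_{i\in\o})$ back to the differential privacy already assumed for $\a=(\u,\{M_i\}_{i\in\o})$, exploiting the fact that the noiseless part $\u$ is a \emph{unitary} channel and hence invertible. Concretely, I would fix any two states $\rho,\sigma\in\dh$ with $D(\rho,\sigma)\leq\eta$ and any subset $\s\subseteq\o$; the goal is to show $\sum_{k\in\s}\tr(M_k(\f\circ\u)(\rho))\leq\exp(\epsilon)\sum_{k\in\s}\tr(M_k(\f\circ\u)(\sigma))+\delta$. The idea is to absorb the post-processing $\f$ into a new pair of input states and then invoke the hypothesis on that pair.

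First I would introduce the super-operator $\g:=\u^{-1}\circ\f\circ\u$, where $\u^{-1}$ is the channel $\tau\mapsto U^\dagger\tau U$ (a legitimate trace-preserving completely positive map because $U^\dagger$ is unitary). As a composition of super-operators, $\g$ is itself a super-operator. Setting $\tau_\rho:=\g(\rho)$ and $\tau_\sigma:=\g(\sigma)$, a direct computation gives $\u(\tau_\rho)=\f(\u(\rho))=(\f\circ\u)(\rho)$ and likewise $\u(\tau_\sigma)=(\f\circ\u)(\sigma)$, so the outputs of the \emph{noisy} algorithm on $\rho,\sigma$ equal the outputs of the \emph{noiseless} algorithm $\a$ on the new inputs $\tau_\rho,\tau_\sigma$. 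Next I would verify that $\tau_\rho$ and $\tau_\sigma$ are still neighbors, i.e. $D(\tau_\rho,\tau_\sigma)\leq\eta$: since trace distance is contractive under any super-operator $\g$ (the data-processing inequality), we have $D(\tau_\rho,\tau_\sigma)=D(\g(\rho),\g(\sigma))\leq D(\rho,\sigma)\leq\eta$. Applying the differential privacy of $\a$ to the neighboring pair $(\tau_\rho,\tau_\sigma)$ and the same subset $\s$ then yields $\sum_{k\in\s}\tr(M_k\u(\tau_\rho))\leq\exp(\epsilon)\sum_{k\in\s}\tr(M_k\u(\tau_\sigma))+\delta$, which upon substituting the identities above is exactly the desired inequality for $(\f\circ\u,\{M_i\}_{i\in\o})$.

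I expect the main obstacle to be the justification of the neighboring condition $D(\tau_\rho,\tau_\sigma)\leq\eta$, which hinges on two facts about the trace distance: its invariance under unitary conjugation and, crucially, its monotonicity (contractivity) under arbitrary super-operators. The latter is the quantum data-processing inequality and is the real engine of the argument. It is also here that the \emph{noiseless}-ness of $\a$ is essential: the construction of $\tau_\rho,\tau_\sigma$ relies on $\u$ being invertible as a channel (so that $\u\circ\g=\f\circ\u$), which is guaranteed precisely because $\u$ is a unitary channel; for a genuinely noisy inner circuit this pullback would not be available, explaining why the theorem is restricted to noiseless algorithms. A minor point to check is that $\u^{-1}$ and hence $\g$ are bona fide super-operators so that contractivity applies, which I would dispatch by exhibiting $U^\dagger$ as a single Kraus operator satisfying $(U^\dagger)^\dagger U^\dagger=I$.
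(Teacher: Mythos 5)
Your proof is correct, but it takes a genuinely different route from the paper's. The paper derives Theorem~\ref{Thm:post-processing} as a corollary of its spectral characterization: it first proves (Lemma in the appendix) that for any super-operator $\e$ and positive semi-definite $M$ one has $\lambda_{\min}(M)\leq \lambda_{\min}(\e^\dagger(M))\leq\lambda_{\max}(\e^\dagger(M))\leq\lambda_{\max}(M)$, then observes that unitary conjugation preserves the extremal eigenvalues, so the quantity $\delta_{\s}=\eta\lambda_{\max}(M_\s)-(e^\epsilon+\eta-1)\lambda_{\min}(M_\s)$ from Theorem~\ref{Thm:verifier} can only decrease when $\f$ is appended; the conclusion then follows from the necessary-and-sufficient condition. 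Your argument instead works directly from Definition~\ref{def:DP}: you use the invertibility of the unitary channel to pull the post-processing back to a pre-processing $\g=\u^{-1}\circ\f\circ\u$ of the inputs, and then invoke contractivity of the trace distance under CPTP maps to certify that $(\g(\rho),\g(\sigma))$ is still a neighboring pair. The two proofs isolate the same essential facts in dual pictures --- your data-processing inequality for states is the Schr\"odinger-picture counterpart of the paper's eigenvalue-squeezing lemma for effects, and both exploit unitarity (spectrum preservation there, channel invertibility here). What your version buys is independence from Theorem~\ref{Thm:verifier} and from the exact formula for $\delta_\s$: it needs only that the distance $D$ is contractive under super-operators, so it would extend verbatim to any such metric, and it makes the role of noiselessness more transparent. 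What the paper's version buys is uniformity with the rest of its development and, via the explicit $\delta_\s$, a quantitative handle on how much privacy is gained rather than merely not lost.
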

However, the above theorem does not hold for a general noisy quantum algorithm $\mathcal{A}$ in the sense that unitary $\mathcal{U}$  is replaced by a (noisy) quantum operation modeled as a super-operator $\e$.   With the help of our main theorem (Theorem~\ref{Thm:verifier}) introduced later for differential privacy verification,  a concrete example showing this fact is provided as  Example~\ref{Appendix:example} at the end of the next section. 


{\vskip 3pt}
\textbf{Composition Theorem}: In order to handle larger quantum algorithms in a modular way, a series of composition theorems for  differential privacy of classical algorithms have been established~\cite{dwork2014algorithmic}. Some of them can be generalized into the quantum case. Given two quantum algorithms
$\a_k=(\e_{k},\{M_{k,j_k}\}_{j_k\in\o_k})$ $(k=1,2)$ , their parallel composition is $\a_{\s_1}\otimes\a_{\s_2}=(\e_{1}\otimes\e_{2},\{M_{1,\s_1}\otimes M_{2,\s_2},I-M_{1,\s_1}\otimes M_{2,\s_2}\})$ for some subsets $\s_k\subseteq \o_k(k=1,2)$, where $M_{k,\s_k}=\sum_{j_k\in\s_k}M_{k,j_k}$. Then we have: 
\begin{theorem}\label{Thm:composition}
For any subsets $\s_k\subseteq \o_k(k=1,2)$, 
\begin{enumerate}\item if $\a_k$ is $\epsilon_k$-differentially private within $\eta_k$ $(k=1,2)$, then  $\a_{\s_1}\otimes\a_{\s_2}$ is $(\epsilon_1+\epsilon_2)$-differentially private within $\eta_1\eta_2$;
\item if $\a_k$ is $(\epsilon_k,\delta_k)$-differentially private within $\eta_k$ $(k=1,2)$, then $\a_{\s_1}\otimes\a_{\s_2}$ is $(\epsilon_1+\epsilon_2,\delta_1+\delta_2)$-differentially private within $\eta_1\eta_2$.
\end{enumerate}
\end{theorem}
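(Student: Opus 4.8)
The plan is to reduce the composed guarantee to the two underlying ones by passing to adjoint (dual) super-operators and then exploiting the product structure of the measurement. First I observe that the composed measurement $\{M_{1,\s_1}\otimes M_{2,\s_2},\,I-M_{1,\s_1}\otimes M_{2,\s_2}\}$ has only two outcomes, so the arbitrary subset in Definition~\ref{def:DP} ranges over the four subsets of a two-element outcome set. The empty and full subsets give the trivial inequalities $0\leq\delta$ and $1\leq\exp(\epsilon_1+\epsilon_2)+\delta$, so it suffices to verify Eq.~(\ref{def-DP-eq}) for the ``success'' operator $M_{1,\s_1}\otimes M_{2,\s_2}$ and for its complement. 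Writing $P_k=\e_k^*(M_{k,\s_k})$ for the adjoint super-operator, trace-preservation of $\e_k$ gives $0\leq P_k\leq I$ together with the identity $\tr((M_{1,\s_1}\otimes M_{2,\s_2})(\e_1\otimes\e_2)(\rho))=\tr((P_1\otimes P_2)\rho)$. Applying the hypotheses on $\a_1,\a_2$ to the subsets $\s_k$ and to their complements then translates into the multiplicative bounds $\tr(P_k\xi)\leq\exp(\epsilon_k)\tr(P_k\zeta)+\delta_k$ and $\tr((I-P_k)\xi)\leq\exp(\epsilon_k)\tr((I-P_k)\zeta)+\delta_k$ for all $\xi,\zeta\in\mathcal{D}(\h_k)$ with $D(\xi,\zeta)\leq\eta_k$.

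Next I would factor the joint product $\tr((P_1\otimes P_2)\rho)$ as a marginal-then-conditional product. Let $\rho_1=\tr_2(\rho)$ and $p_\rho=\tr(P_1\rho_1)=\tr((P_1\otimes I)\rho)$ be the acceptance probability on system~$1$; the key identity is $\tr((P_1\otimes P_2)\rho)=p_\rho\cdot\tr(P_2\,\tau_\rho)$, where $\tau_\rho=\tr_1((\sqrt{P_1}\otimes I)\rho(\sqrt{P_1}\otimes I))/p_\rho$ is the post-selected conditional state on system~$2$ given that system~$1$ accepts. For the first factor, monotonicity of the trace distance under the partial trace gives $D(\rho_1,\sigma_1)\leq D(\rho,\sigma)\leq\eta_1\eta_2\leq\eta_1$, so the translated hypothesis on $\a_1$ yields $p_\rho\leq\exp(\epsilon_1)p_\sigma+\delta_1$. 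The complementary ``failure'' operator decomposes the same way, via $\tr((I-P_1\otimes P_2)\rho)=\tr((I-P_1)\rho_1)+p_\rho\,\tr((I-P_2)\tau_\rho)$, so the same pair of bounds (now using the complement subsets) drives it.

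The heart of the argument, and the step I expect to be the main obstacle, is the second factor: showing $\tr(P_2\tau_\rho)\leq\exp(\epsilon_2)\tr(P_2\tau_\sigma)+\delta_2$. This follows from the hypothesis on $\a_2$ as soon as $D(\tau_\rho,\tau_\sigma)\leq\eta_2$, and this is precisely where the product neighbourhood $\eta_1\eta_2$ is needed. I would bound the distance between the two conditional states by writing $\tau_\rho-\tau_\sigma=\tfrac1{p_\rho}(\omega_\rho-\omega_\sigma)+(\tfrac1{p_\rho}-\tfrac1{p_\sigma})\omega_\sigma$ with $\omega_\rho=\tr_1((\sqrt{P_1}\otimes I)\rho(\sqrt{P_1}\otimes I))$, using that the partial trace composed with the filter $\sqrt{P_1}\otimes I$ (with $\sqrt{P_1}\leq I$) contracts the trace norm, so that $\|\omega_\rho-\omega_\sigma\|_1\leq\|\rho-\sigma\|_1=2D(\rho,\sigma)$ and $|p_\rho-p_\sigma|\leq 2D(\rho,\sigma)$. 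The delicate point is that post-selection can amplify the trace distance by a factor governed by the inverse acceptance probability $1/p_\rho$; the role of the hypothesis on $\a_1$ is exactly to keep $P_1$ (hence $p_\rho$) bounded below relative to $\eta_1$, so that $D(\rho,\sigma)\leq\eta_1\eta_2$ propagates to $D(\tau_\rho,\tau_\sigma)\leq\eta_2$. Turning this amplification estimate into a clean (non-lossy) bound is the technical crux; I would either sharpen the filtering estimate above or invoke the spectral characterisation of differential privacy from Theorem~\ref{Thm:verifier}, which re-expresses the condition through the eigenvalues of the operators $P_k$.

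Finally, I would assemble the two factors. For part~(1) ($\delta_k=0$) the bounds simply multiply: $\tr((P_1\otimes P_2)\rho)=p_\rho\tr(P_2\tau_\rho)\leq\exp(\epsilon_1)p_\sigma\cdot\exp(\epsilon_2)\tr(P_2\tau_\sigma)=\exp(\epsilon_1+\epsilon_2)\tr((P_1\otimes P_2)\sigma)$, which is exactly where the additive exponent $\epsilon_1+\epsilon_2$ originates; the failure outcome is bounded identically using $\exp(\epsilon_1)\leq\exp(\epsilon_1+\epsilon_2)$. For part~(2) a naive multiplication of the two $(\epsilon_k,\delta_k)$ bounds produces unwanted cross terms, so I would instead telescope, via $p_\rho\tr(P_2\tau_\rho)-\exp(\epsilon_1+\epsilon_2)p_\sigma\tr(P_2\tau_\sigma)=\tr(P_2\tau_\rho)(p_\rho-\exp(\epsilon_1)p_\sigma)+\exp(\epsilon_1)p_\sigma(\tr(P_2\tau_\rho)-\exp(\epsilon_2)\tr(P_2\tau_\sigma))$, and control the slack through the standard ``bad-event'' reformulation of $(\epsilon,\delta)$-differential privacy so that the two deficits add to $\delta_1+\delta_2$ rather than compounding. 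Combining the success and failure inequalities over all subsets then gives that $\a_{\s_1}\otimes\a_{\s_2}$ is $(\epsilon_1+\epsilon_2,\delta_1+\delta_2)$-differentially private within $\eta_1\eta_2$.
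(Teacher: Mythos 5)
Your reduction to the two--outcome measurement and the identity $\tr\bigl((M_{1,\s_1}\otimes M_{2,\s_2})(\e_1\otimes\e_2)(\rho)\bigr)=\tr\bigl((P_1\otimes P_2)\rho\bigr)$ with $P_k=\e_k^\dagger(M_{k,\s_k})$ are fine, but the marginal--then--conditional factorisation breaks exactly where you flag it: the claim that $D(\rho,\sigma)\leq\eta_1\eta_2$ propagates to $D(\tau_\rho,\tau_\sigma)\leq\eta_2$ is not something the hypothesis on $\a_1$ can deliver. Post-selection by the filter $\sqrt{P_1}\otimes I$ amplifies trace distance by a factor of up to $\lambda_{\max}(P_1)/p_\rho$, and the only lower bound available on the acceptance probability is $p_\rho\geq\lambda_{\min}(P_1)$; by Theorem~\ref{Thm:verifier} the hypothesis on $\a_1$ gives $\lambda_{\max}(P_1)/\lambda_{\min}(P_1)\leq(e^{\epsilon_1}+\eta_1-1)/\eta_1$, so even before the renormalisation correction the best bound obtainable along your lines is of the form $c\,(e^{\epsilon_1}+\eta_1-1)\,\eta_2$ with $c\geq 1$, which is $\geq\eta_2$ and strictly exceeds it as soon as $\epsilon_1>0$. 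The factor $\eta_1$ in the product neighbourhood therefore does not pay for the post-selection loss, the hypothesis on $\a_2$ cannot be applied to the pair $(\tau_\rho,\tau_\sigma)$, and the telescoping for part~(2) has nothing to stand on. (A smaller but fatal side issue: $\lambda_{\min}(P_1)$ may be $0$, in which case $p_\rho$ has no positive lower bound at all, yet the theorem still holds.)

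The paper avoids conditioning altogether and argues purely spectrally. By Theorem~\ref{Thm:verifier}, $\epsilon$-differential privacy within $\eta$ is equivalent to $\kappa(M_\s)\leq(e^{\epsilon}+\eta-1)/\eta$, and $(\epsilon,\delta)$-differential privacy to $\delta\geq\eta\lambda_{\max}(M_\s)-(e^{\epsilon}+\eta-1)\lambda_{\min}(M_\s)$. Since $(\e_1\otimes\e_2)^\dagger(M_{1,\s_1}\otimes M_{2,\s_2})=P_1\otimes P_2$ and the extreme eigenvalues of positive semi-definite matrices are multiplicative under tensor products, everything reduces to the scalar inequality $\frac{e^{\epsilon_1}+\eta_1-1}{\eta_1}\cdot\frac{e^{\epsilon_2}+\eta_2-1}{\eta_2}\leq\frac{e^{\epsilon_1+\epsilon_2}+\eta_1\eta_2-1}{\eta_1\eta_2}$ for part~(1), and to an analogous short estimate combining $(e^{\epsilon_1+\epsilon_2}+\eta_1\eta_2-1)\geq(e^{\epsilon_1}+\eta_1-1)(e^{\epsilon_2}+\eta_2-1)$ with the bounds $\eta_k\lambda_{\max}(P_k)-\delta_k\leq(e^{\epsilon_k}+\eta_k-1)\lambda_{\min}(P_k)$ for part~(2). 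Your closing remark that one could ``invoke the spectral characterisation'' is the right instinct, but as written it replaces rather than repairs your route; carried out at the level of $\lambda_{\max}(P_1\otimes P_2)=\lambda_{\max}(P_1)\lambda_{\max}(P_2)$ and $\lambda_{\min}(P_1\otimes P_2)=\lambda_{\min}(P_1)\lambda_{\min}(P_2)$, the conditional states are no longer needed at all.
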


{\vskip 3pt}

\textit{\textbf{Remark.}} There are quite a few papers on the robustness of quantum machine learning~\cite{guan2021robustness,du2020quantum}. In these papers, the quantum robustness of quantum classiﬁer (which is mathematically a deterministic function) is the ability to make correct classiﬁcation with a small perturbation to a given input state (a local property), while quantum differential privacy ensures that a quantum algorithm (which is mathematically a randomized function) must behave similarly on all similar input states (a global property). Therefore, quantum differential privacy and robustness mainly differ on the studied functions and the property type. However, a deeper connection between quantum differential privacy and robustness may be built if we make some generalizations. In classical machine learning, the trade-off phenomenon of differential privacy and robustness has been found and several similarities of them have been reported if we can generalize the definition of robustness to randomized functions and consider Renyi-differential privacy~\cite{pinot2019unified}. However, this is still unclear in the quantum domain as the study of trustworthy quantum machine learning is at a very early age. We are interested in exploring this as the next step.


\section{Differential Privacy Verification}
In this section, we develop an algorithm for the differential privacy verification of quantum algorithms. Formally, the major problem concerned in this paper is the following:
\begin{problem}[Differential Privacy Verification Problem]\label{problem:DP}
Given a quantum algorithm $\a$ and $ 1\geq \epsilon,\delta,\eta\geq 0$, check whether or not $\a$ is $(\epsilon,\delta)$-differentially private within $\eta$. If not,  then (at least) one counterexample of quantum states $(\rho,\sigma)$ is provided.
\end{problem}

To solve this verification problem, we first find a necessary and sufficient condition for the differential privacy.  Specifically, we show that the differential privacy of a quantum algorithm can be characterized by a system of  inequalities. To this end, let us introduce several notations. For a positive semi-definite matrix $M$, we use 
$\lambda_{max}(M)$ and $\lambda_{min}(M)$ to denote  the maximum and minimum eigenvalues of  $M$, respectively. 
For a (noisy) quantum circuit modeled by a linear map $\e$ in the Kraus matrix form $\e=\{E_k\}_{k\in\k}$, the dual mapping of $\e$, denoted as $\e^\dagger$, is defined by 
\[\e^\dagger(M)=\sum_{k\in\k}E_k^\dagger M E_k \text{ for any positive semi-definite matrix } M.\]

\begin{theorem}[Sufficient and Necessary Condition]\label{Thm:verifier}
    Let $\a=(\e,\{M_{k}\}_{k\in\o})$ be a quantum algorithm. Then: \begin{enumerate}\item $\a$ is $(\epsilon,\delta)$-differentially private within $\eta$ if and only if \begin{equation}\label{privacy-in}\delta\geq \max_{\s\subseteq\o}\delta_{\s}\end{equation} where 
    \[\delta_{\s}=\eta\lambda_{\max}(M_{\s})-(e^\epsilon+\eta-1)\lambda_{\min}(M_{\s}),\]
     and matrix  $M_{\s}=\sum_{k\in\s}\e^\dagger(M_{k})$. 

    \item In particular, $\a$ is $\epsilon$-differentially private within $\eta$ if and only if  $\epsilon\geq \epsilon^*$, the optimal bound (minimum value) of $\epsilon$, where \[\epsilon^*=\ln[(\kappa^*-1)\eta+1] \quad\text{ and } \quad\kappa^*=\max_{\s\subseteq \o
    }\kappa(M_{\s}),\]
   $\kappa(M_{\s})=\frac{\lambda_{\max}(M_{\s})}{\lambda_{\min}(M_{\s})}$ is the condition number\footnote{ In numerical analysis, the condition number~\cite{higham1995condition} of  a matrix can be thought of both as a measure of the sensitivity of the solution of a linear system to perturbations in the data and as a measure of the sensitivity of the matrix inverse to perturbations in the matrix.} of matrix  $M_{\s}$, and if $\lambda_{\min}(M_\s)=0$, then  $\kappa(M_{\s})=+\infty$.\end{enumerate}
    \end{theorem}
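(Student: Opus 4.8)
The plan is to convert the differential-privacy condition into a purely spectral statement about the matrices $M_\s=\sum_{k\in\s}\e^\dagger(M_k)$ and then solve a small eigenvalue optimization. First I would invoke the defining adjoint identity $\tr(M\e(\rho))=\tr(\e^\dagger(M)\rho)$, which is immediate from the Kraus form and the cyclic property of trace, to rewrite, for each fixed $\s\subseteq\o$, both sides of the privacy inequality~(\ref{def-DP-eq}) as $\sum_{k\in\s}\tr(M_k\e(\rho))=\tr(M_\s\rho)$ and similarly for $\sigma$. Since $\{M_k\}_{k\in\o}$ is a POVM and $\e^\dagger$ is positive and unital ($\e^\dagger(I)=\sum_k E_k^\dagger E_k=I$), each $M_\s$ is positive semi-definite with $0\leq M_\s\leq I$. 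Definition~\ref{def:DP} then holds if and only if, for every $\s$, the parameter $\delta$ dominates $\delta_\s:=\max\{\tr(M_\s\rho)-e^\epsilon\tr(M_\s\sigma): D(\rho,\sigma)\leq\eta\}$; equivalently $\delta\geq\max_{\s\subseteq\o}\delta_\s$. This reduces the theorem to evaluating $\delta_\s$ for a single fixed positive semi-definite $M=M_\s$.

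To compute this optimal value I would write $f(\rho,\sigma)=\tr(M\rho)-e^\epsilon\tr(M\sigma)=-(e^\epsilon-1)\tr(M\sigma)+\tr(M(\rho-\sigma))$ and take the Jordan decomposition $\rho-\sigma=\Delta_+-\Delta_-$ with $\Delta_\pm\geq0$ of orthogonal support, so that $\tr\Delta_+=\tr\Delta_-=D(\rho,\sigma)\leq\eta$. Using $\lambda_{\min}(M)I\leq M\leq\lambda_{\max}(M)I$ I would bound $\tr(M\Delta_+)\leq\lambda_{\max}(M)\tr\Delta_+$, $\tr(M\Delta_-)\geq\lambda_{\min}(M)\tr\Delta_-$, and $\tr(M\sigma)\geq\lambda_{\min}(M)$ (legitimate because $e^\epsilon-1\geq0$ as $\epsilon\geq0$). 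Since the distance $t:=\tr\Delta_+$ obeys $t\leq\eta$ and $\lambda_{\max}(M)-\lambda_{\min}(M)\geq0$, these combine to the clean bound $f\leq\eta\lambda_{\max}(M)-(e^\epsilon+\eta-1)\lambda_{\min}(M)$, which is exactly the claimed value of $\delta_\s$.

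For tightness I would exhibit a saturating pair. Taking unit eigenvectors $\ket{v_{\min}}$ and $\ket{v_{\max}}$ of $M$ for $\lambda_{\min}(M)$ and $\lambda_{\max}(M)$, set $\sigma=\ketbra{v_{\min}}{v_{\min}}$ and $\rho=(1-\eta)\ketbra{v_{\min}}{v_{\min}}+\eta\ketbra{v_{\max}}{v_{\max}}$. Both are legitimate density operators (using $0\leq\eta\leq1$), their trace distance is exactly $\eta$, and direct substitution gives $f(\rho,\sigma)=\eta\lambda_{\max}(M)-(e^\epsilon+\eta-1)\lambda_{\min}(M)$, establishing $\delta_\s$ and hence part~(1). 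For part~(2) I would set $\delta=0$: now $\a$ is $\epsilon$-differentially private iff $\delta_\s\leq0$ for all $\s$, i.e. $\eta\lambda_{\max}(M_\s)\leq(e^\epsilon+\eta-1)\lambda_{\min}(M_\s)$. When $\lambda_{\min}(M_\s)>0$ this rearranges to $e^\epsilon\geq(\kappa(M_\s)-1)\eta+1$ with $\kappa(M_\s)=\lambda_{\max}(M_\s)/\lambda_{\min}(M_\s)$; when $\lambda_{\min}(M_\s)=0$ (so $\kappa(M_\s)=+\infty$) the inequality forces $\lambda_{\max}(M_\s)=0$, consistent with the stated convention. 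As $(\kappa-1)\eta+1$ is increasing in $\kappa$, taking the maximum over the finitely many subsets $\s$ yields the optimal bound $\epsilon^*=\ln[(\kappa^*-1)\eta+1]$ with $\kappa^*=\max_\s\kappa(M_\s)$.

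The main obstacle is the tightness half of the eigenvalue optimization: the upper bound is obtained by separately saturating three inequalities (on $\tr(M\Delta_+)$, $\tr(M\Delta_-)$, and $\tr(M\sigma)$), so one must verify that a single admissible pair of density operators can realize all three simultaneously while keeping $\rho,\sigma\geq0$ and $D(\rho,\sigma)\leq\eta$. This is precisely why the counterexample is built so that $\sigma$ and $\Delta_-$ lie along the minimal-eigenvalue direction and $\Delta_+$ along the maximal one — the orthogonality of these supports is what makes the optimum simultaneously attainable, and it is also this pair that serves as the concrete differentially-private counterexample sought in Problem~\ref{problem:DP}.
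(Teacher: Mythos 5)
Your proposal is correct and follows essentially the same route as the paper's proof: reducing the privacy condition to $\delta\geq\sup\tr(M_\s(\rho-e^\epsilon\sigma))$ via the dual map, bounding this supremum through the Jordan decomposition $\rho-\sigma=\Delta_+-\Delta_-$ with $\tr(\Delta_\pm)\leq\eta$, and saturating it with exactly the same pair of states built from the extremal eigenvectors of $M_\s$. The only addition worth noting is your explicit attention to the degenerate case $\lambda_{\min}(M_\s)=0$ in part (2), which the paper handles only through its stated convention.
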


  By the above theorem, we see that the verification problem (i.e. Problem \ref{problem:DP}) can be tackled by solving the system (\ref{privacy-in}) of inequalities. Consequently, it can be solved by computing the maximum and minimum eigenvalues (and their eigenvectors) of  positive semi-definite matrix $M_{\s}$. In particular, for the case of $\epsilon$-differential privacy, we have:
  \begin{enumerate}
      \item [(1)] the maximum value $1\leq \kappa^*\leq +\infty$ of condition numbers of $M_{\s}$ over $\s\subseteq \o$ measures the $\epsilon$-differential privacy of  noisy quantum algorithm $\a=(\e,\{M_{k}\}_{k\in\o})$ (for fixed $\eta$). For the extreme cases, 
      \begin{enumerate}
          \item [(i)] if $k^*=1$, then $\epsilon^*=0$, and $\a$ is $\epsilon$-differentially private for any $\epsilon\geq 0$;
          \item [(ii)] if $k^*=+\infty$,  then $\epsilon^*=+\infty$, and $\a$ is not $\epsilon$-differentially private for any $\epsilon\geq 0$.
      \end{enumerate}
      In the following evaluation (Section~\ref{Sec:evaluation}), we will compute $\kappa^*$ for diverse noisy quantum algorithms with different noise levels on quantum circuits  to show  that quantum differential privacy can benefit from the  quantum noises on quantum circuits.
      \item [(2)]  we can  characterize the $\epsilon$-diﬀerential privacy of a noisy quantum algorithm for diﬀerent values of $\eta$, i.e., the optimal bound $\epsilon*$  can be regarded as a function $\epsilon^*(\cdot)$ of $\eta$ as follows:
      \[\epsilon^*(\eta)=\ln[(\kappa^*-1)\eta+1] \ \text{ where } \ \kappa^*\geq 1.\]
      As we can see from the above equation, the value of $\epsilon^*$ logarithmically increases with $\eta$. This reveals that  as the quantum noise level on input states increases, the differential privacy increases because $\eta$ can measure the noisy neighboring relation of the input states effected by the quantum noises, which has been  illustrated after  Definition~\ref{def:DP} and by Example~\ref{example_definition}. This finding provides the theoretical guarantee that adding noises to input states is a way to improve the differential privacy of quantum algorithms.
  \end{enumerate} 
  In summary, quantum differential privacy can benefit from the quantum noise on either quantum circuits or  input states.


    Furthermore, we are able to give a characterization of differential privacy counterexamples: 
    
    \begin{theorem}[Counterexamples]\label{Thm:generator} If $\a$ is not $(\epsilon,\delta)$-differentially private within $\eta$, then for any $\s\subseteq\o$ with $\delta<\delta_{\s}$ (defined in Theorem~\ref{Thm:verifier}),
    any pair of quantum states $(\gamma,\phi)$ of the form:
    \[\gamma=\eta{\psi}+(1-\eta){\phi}\qquad \phi=\ketbra{\phi}{\phi}\]
    is a $(\epsilon,\delta)$-differential privacy  counterexample within $\eta$, 
     where $\psi=\ketbra{\psi}{\psi}$, and $\ket{\psi}$ and $\ket{\phi}$ are normalized eigenvectors of $M_{\s}$ (defined in Theorem~\ref{Thm:verifier}) corresponding to the maximum and minimum eigenvalues, respectively.
\end{theorem}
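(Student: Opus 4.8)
The plan is to verify directly that the stated pair $(\gamma,\phi)$ satisfies the two defining properties of an $(\epsilon,\delta)$-differential privacy counterexample within $\eta$: first that $\gamma$ and $\phi$ are genuine neighboring states, i.e.\ $D(\gamma,\phi)\leq\eta$, and second that they violate the privacy inequality for the chosen subset $\s$. The engine driving both computations is the dual-map identity already underlying Theorem~\ref{Thm:verifier}: from the adjoint relation $\tr(M_k\e(\rho))=\tr(\e^\dagger(M_k)\rho)$ and linearity of $\e^\dagger$, for any state $\rho$ we get $\sum_{k\in\s}\tr(M_k\e(\rho))=\tr(M_{\s}\rho)$ with $M_{\s}=\sum_{k\in\s}\e^\dagger(M_k)$. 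Thus the entire problem collapses to evaluating the linear functional $\rho\mapsto\tr(M_{\s}\rho)$ on the two extremal eigenstates of $M_{\s}$.

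First I would check that $\gamma$ and $\phi$ are legitimate density operators and compute their distance. Since $\phi=\ketbra{\phi}{\phi}$ is a pure state and $0\le\eta\le1$, the convex combination $\gamma=\eta\psi+(1-\eta)\phi$ is positive semi-definite with unit trace, hence a valid state. For the distance, note $\gamma-\phi=\eta(\psi-\phi)$, so $D(\gamma,\phi)=\tfrac{\eta}{2}\tr|\psi-\phi|$. Because $\ket{\psi}$ and $\ket{\phi}$ are eigenvectors of the Hermitian matrix $M_{\s}$ belonging to the distinct eigenvalues $\lambda_{\max}(M_{\s})>\lambda_{\min}(M_{\s})$, they are orthogonal, and the trace distance between two orthogonal pure states equals $1$ (the restriction of $\psi-\phi$ to their span has eigenvalues $\pm1$). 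Therefore $D(\gamma,\phi)=\eta$, meeting the neighboring constraint exactly.

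Next I would establish the violation. Using the reduction above together with $\bra{\psi}M_{\s}\ket{\psi}=\lambda_{\max}(M_{\s})$ and $\bra{\phi}M_{\s}\ket{\phi}=\lambda_{\min}(M_{\s})$, the two relevant probabilities are $\sum_{k\in\s}\tr(M_k\e(\phi))=\lambda_{\min}(M_{\s})$ and $\sum_{k\in\s}\tr(M_k\e(\gamma))=\eta\lambda_{\max}(M_{\s})+(1-\eta)\lambda_{\min}(M_{\s})$. Subtracting $e^\epsilon$ times the former from the latter yields exactly $\eta\lambda_{\max}(M_{\s})-(e^\epsilon+\eta-1)\lambda_{\min}(M_{\s})=\delta_{\s}$. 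Hence the privacy gap attained by $(\gamma,\phi)$ is precisely $\delta_{\s}$, and the hypothesis $\delta<\delta_{\s}$ immediately gives $\sum_{k\in\s}\tr(M_k\e(\gamma))>e^\epsilon\sum_{k\in\s}\tr(M_k\e(\phi))+\delta$, which is the required violation.

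The only point warranting care — rather than genuine difficulty — is the orthogonality step, which presumes a strict eigenvalue gap $\lambda_{\max}(M_{\s})>\lambda_{\min}(M_{\s})$. I would dispatch this by noting that if $M_{\s}$ had a single eigenvalue $\lambda$, then $\delta_{\s}=(1-e^\epsilon)\lambda\le0\le\delta$ (as $\lambda\ge0$ by positive semi-definiteness and $e^\epsilon\ge1$ since $\epsilon\ge0$), contradicting the standing assumption $\delta<\delta_{\s}$. Thus the gap is automatic whenever a counterexample is being sought, guaranteeing orthogonal extremal eigenvectors and the clean value $D(\gamma,\phi)=\eta$. The remaining manipulations are routine, so no real obstacle arises; the substance of the theorem is simply that the extremal eigenstates of $M_{\s}$, blended in the proportion $\eta:(1-\eta)$ dictated by the distance bound, saturate the bound of Theorem~\ref{Thm:verifier} exactly.
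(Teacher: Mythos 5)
Your proof is correct and follows essentially the same route as the paper: both arguments verify directly that the pair $(\gamma,\phi)$ built from the extremal eigenvectors of $M_{\s}$ satisfies $D(\gamma,\phi)=\eta$ and attains the gap $\delta_{\s}=\eta\lambda_{\max}(M_{\s})-(e^\epsilon+\eta-1)\lambda_{\min}(M_{\s})$ exactly, so that $\delta<\delta_{\s}$ forces a violation. You are in fact slightly more careful than the paper, which asserts $D(\gamma,\phi)=\eta$ without remarking that this rests on orthogonality of the extremal eigenvectors, i.e.\ on the strict gap $\lambda_{\max}(M_{\s})>\lambda_{\min}(M_{\s})$ that you correctly deduce from $\delta<\delta_{\s}$ together with $\delta\geq 0$.
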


Now we are ready to provide an example showing that Theorem~\ref{Thm:post-processing} does not hold for noisy quantum algorithms. This example also demonstrates the method for solving the verification problem (Problem~\ref{problem:DP}) using Theorem~\ref{Thm:verifier} and~\ref{Thm:generator}. 

\begin{example}\label{Appendix:example}
Let $\h$ be a 2-qubit Hilbert space, i.e., \[\h = \mathrm{span}\{\ket{0,0},\ket{0,1},\ket{1,0},\ket{1,1}\},\] and $\a = (\e, \{M_0,M_1\})$ be a noisy quantum algorithm on $\h$, where $\e$  is not a unitary but a super-operator with the Kraus matrix form $\e=\{E_{i}\}_{i=1}^4$ with
\begin{equation*}
    \begin{aligned}
        &E_1 = \frac{1}{\sqrt{3}}(\ket{0,0}+\ket{1,0}+\ket{1,1})\bra{0,0} \\
        &E_2=\frac{1}{\sqrt{3}}(\ket{0,1}+\ket{1,0}+\ket{1,1})\bra{0,1}\\
        &E_3=\frac{1}{\sqrt{6}}(\ket{0,0}+\ket{0,1}+{2}\ket{1,0})\bra{1,0}\\
         &E_4=\frac{1}{\sqrt{6}}(\ket{0,0}+\ket{0,1}+{2}\ket{1,1})\bra{1,1}
    \end{aligned}
\end{equation*} and measurement operators \[M_0 = \ketbra{0,0}{0,0}+\ketbra{0,1}{0,1}\quad M_1 = \ketbra{1,0}{1,0}+\ketbra{1,1}{1,1}.\]
It can be calculated that 
\begin{equation*}
    \begin{aligned}
     &\e^{\dagger}(M_0) = \frac{1}{3} (\ketbra{0,0}{0,0}+\ketbra{0,1}{0,1}+\ketbra{1,0}{1,0}+\ketbra{1,1}{1,1})\\
     &\e^{\dagger}(M_1)=\frac{2}{3} (\ketbra{0,0}{0,0}+\ketbra{0,1}{0,1}+\ketbra{1,0}{1,0}+\ketbra{1,1}{1,1}).
    \end{aligned}
\end{equation*} Then 
\begin{equation*}
    \begin{aligned}
    &\lambda_{\max}(\e^{\dagger}(M_0+M_1)) = \lambda_{\min}(\e^{\dagger}(M_0+M_1)) = 1\\
    &\lambda_{\max}(\e^{\dagger}(M_0)) = \lambda_{\min}(\e^{\dagger}(M_0)) = \frac{1}{3}\\
    &\lambda_{\max}(\e^{\dagger}(M_1)) = \lambda_{\min}(\e^{\dagger}(M_1)) = \frac{2}{3}.
    \end{aligned}
\end{equation*}
Consequently, $\kappa^*=1$ implies $\epsilon^*=0$ by Theorem~\ref{Thm:verifier} and then $\a$ is $\epsilon$-differentially private for any  $\epsilon\geq 0$.

However, if we choose a quantum noise represented by the following  super-operator 
\[\f=\left\{\ketbra{0,0}{0,0}, \ketbra{1,0}{0,1}, \ketbra{1,0}{1,0},\ketbra{1,1}{1,1}\right\}\]
such that 
\begin{equation*}
    \begin{aligned}
        (\f\circ\e)^{\dagger}(M_0) &= \e^{\dagger}(\f^{\dagger}(M_0)) \\
        &= \frac{1}{6}(2\ketbra{0,0}{0,0} + \ketbra{1,0}{1,0} + \ketbra{1,1}{1,1}).
    \end{aligned}
\end{equation*} 
Then \[\lambda_{\max}((\f\circ\e)^{\dagger}(M_0)) = \frac{1}{3}   \qquad \lambda_{\min}((\f\circ\e)^{\dagger}(M_0)) = 0\]
with normalized eigenvectors $\ket{0,0}$ and $\ket{0,1}$, respectively. Thus  $\kappa^*=+\infty$ implies $\epsilon^*=+\infty$ by Theorem~\ref{Thm:verifier}. Subsequently, the  noisy quantum algorithm $\a'=(\f\circ\e, \{M_0, M_1\})$ is not $\epsilon$-differentially private for any $\epsilon\geq 0$. Furthermore, in this case, by Theorem~\ref{Thm:generator}, $(\gamma, \phi)$ is a $\epsilon$-differential privacy counterexample of the algorithm for any $\epsilon\geq 0$, where
\[\gamma=\eta\ketbra{0,0}{0,0}+(1-\eta){\ketbra{0,1}{0,1}}\qquad \phi=\ketbra{0,1}{0,1}.\]
\end{example}

\subsection{Differential Privacy Verification Algorithm}\label{sec_algorithms}

Theorems \ref{Thm:verifier} and  \ref{Thm:generator} provide a theoretical basis for developing algorithms for verification and violation detection of quantum differential privacy.  
Now we are ready to present them.  Algorithm~\ref{Algorithm} is designed  for  verifying the $(\epsilon,\delta)$-differential privacy for (noisy) quantum algorithms. For estimating parameter $\epsilon$ in the $\epsilon$-differential privacy, Algorithm~\ref{Algorithm:kappa} is developed to compute the maximum condition number $\kappa^*$ (with a counterexample)  as in  Theorem~\ref{Thm:verifier}. By calling Algorithm~\ref{Algorithm:kappa}, an alternative way for verifying $\epsilon$-differential privacy is obtained as Algorithm~\ref{Algorithm:epsilon}.

\begin{algorithm}[ht]

\caption{DP($\a,\epsilon,\delta,\eta$)}
\label{Algorithm}
    \begin{algorithmic}[1]
    \Require A quantum algorithm $\a=(\e=\{E_j\}_{j\in\j},\{M_{k}\}_{k\in\o})$  on a Hilbert space $\h$ with dimension $2^n$, and real numbers $ \epsilon,\delta,\eta\geq 0$.
    \Ensure \True{} indicates $\a$ is $(\epsilon,\delta)$-differentially private within $\eta$ or \False{} with a counterexample $(\rho,\sigma)$ indicates  $\a$ is not $(\epsilon,\delta)$-differentially private within $\eta$.
        \ForAll{$k\in \o$}\label{line:1}
    \State $W_k=\e^\dagger(M_{k})=\sum_{j\in\j}E_j^\dagger M_k E_j $\label{line:2}
    \EndFor
    \State $\delta^*=0$, $\s^*=\emptyset$ be an empty set and $M_{\s^*}={\bf 0}$, zero matrix.
    \ForAll{$\s\subseteq\o$}
    \State $M_{\s}=\sum_{k\in \s}W_k$ and $\delta_{\s}=\eta\lambda_{\max}(M_\s)-(e^\epsilon+\eta-1)\lambda_{\min}(M_\s)$\label{line:6}
    \If{$\delta_{\s}>\delta^*$}
    \State $\delta^*=\delta_{\s}$, $\s^*=\s$ and $M_{\s^*}=M_{\s}$
    \EndIf
    \EndFor
    \If{$\delta\geq \delta^*$}
    \State \Return \True{}\label{line:12}
    \Else
    \State  $\ket{\psi}$ and $\ket{\phi}$ are obtained from two normalized eigenvectors corresponding to the maximum and minimum eigenvalues of $M_{\s^*}$, respectively.\label{line:14}
    \State\Return \False{} and $(\eta\psi+(1-\eta)\phi,\phi)$\label{line:15}
    \EndIf
    \end{algorithmic}  
\end{algorithm}

In the following, we analyze the correctness and complexity of  Algorithm~\ref{Algorithm}. Those of Algorithms~\ref{Algorithm:kappa} and \ref{Algorithm:epsilon} can be derived in a similar way.

{\bf Correctness:} Algorithm~\ref{Algorithm} consists of two components --- a verifier (Lines~\ref{line:1}-\ref{line:12}) and a counterexample generator (Lines~\ref{line:14}-\ref{line:15}). Following the verification procedure in the first part of Theorem~\ref{Thm:verifier}, the verifier is designed to check whether or not a quantum algorithm is $(\epsilon,\delta)$-differentially private within $\eta$. The counterexample generator is constructed using  Theorem~\ref{Thm:generator} asserting that $(\eta\psi+(1-\eta)\phi,\phi)$ is a $(\epsilon,\delta)$-differential privacy counterexample if there is a subset $\s\subseteq \o$, i.e., $\s^*$ in the algorithm,  such that $\delta^*=\delta_{\s^*}>\delta$, where $\ket{\psi}$ and $\ket{\phi}$ are normalized eigenvectors of $M_{\s^*}$ (defined in Theorem~\ref{Thm:verifier}) corresponding to the maximum and minimum eigenvalues, respectively.

\begin{algorithm}[ht]
\caption{$\text{DP}_\kappa(\a)$}
\label{Algorithm:kappa}
    \begin{algorithmic}[1]
    \Require  A quantum algorithm $\a=(\e=\{E_j\}_{j\in\j},\{M_{k}\}_{k\in\o})$  on a Hilbert space $\h$ with dimension $2^n$.
    \Ensure The maximum condition number $\kappa^*$  and a counterexample as in Theorems~\ref{Thm:verifier} and~\ref{Thm:generator}, respectively. 
    \ForAll{$i\in \o$}
    \State $W_i=\e^\dagger(M_{k})=\sum_{j\in\j}E_j^\dagger M_{k} E_j $
    \EndFor
    \State $\kappa^*=0$, $\s^*=\emptyset$ be an empty set and $M_{\s^*}={\bf 0}$, the zero matrix.
    \ForAll{$\s\subseteq\o$}
    \State  $\kappa(M_{\s})=\frac{\lambda_{\max}(M_{\s})}{\lambda_{\min}(M_{\s})}$ for $M_{\s}=\sum_{k\in \s}W_k$
    \If{$\kappa(M_{\s})>\kappa^*$}
    \State $\kappa^*=\kappa(M_\s)$, $\s^*=\s$ and $M_{\s^*}=M_\s$
    \EndIf
    \EndFor
    \State $\ket{\psi}$ and $\ket{\phi}$ are obtained from two normalized eigenvectors corresponding to the maximum and minimum eigenvalues of $M_{\s^*}$, respectively.
    \State\Return $\kappa^*$ and $(\eta\psi+(1-\eta)\phi,\phi)$
    \end{algorithmic}
\end{algorithm}

{\bf Complexity:}
 The complexity of Algorithm~\ref{Algorithm} mainly attributes to the calculations in Lines~\ref{line:2},~\ref{line:6} and~\ref{line:14}. In Line~\ref{line:2}, computing   $W_k=\sum_{j\in\j}E_j^\dagger M_k E_j $ for each $k\in\o$ needs $O(2^{5n})$ as the multiplication of $2^n\times 2^n$ matrices needs  $O(2^{3n})$ operations, and the number $|\j|$ of the Kraus operators $\{E_j\}_{j\in\j}$ of $\e$ can be at most $2^{2n}$~\cite[Chapter 2.2]{wolf2012quantum}; In Line~\ref{line:6}, calculating  $\sum_{k\in \s}W_k$  and its maximum and minimum eigenvalues  (and the corresponding eigenvectors for $\s=\s^*$ in Line~\ref{line:14}) for each $A \subseteq \o$ costs $O(2^{|\o|}|\o|2^{2n})$ since the number of subsets of $\o$ is $2^{|\o|}$, $|\s|\leq |\o|$ for any $\s\subseteq \o$ and computing maximum and minimum eigenvalues with corresponding eigenvectors of $2^n\times 2^n$ matrix by basic power method~\cite{doi:10.1137/1.9780898719581} costs $O(2^{2n})$. Therefore, the total complexity of Algorithm~\ref{Algorithm} is $O(2^{5n}+2^{|\o|}|\o|2^{2n})$.

\begin{algorithm}[ht]

\caption{$\text{DP}_\epsilon(\a,\epsilon,\eta$)}
\label{Algorithm:epsilon}
    \begin{algorithmic}[1]
    \Require A quantum algorithm $\a=(\e=\{E_j\}_{j\in\j},\{M_{k}\}_{k\in\o})$  on a Hilbert space $\h$ with dimension $2^n$, and real numbers $\epsilon,\eta\geq 0$.
    \Ensure \True{} indicates $\a$ is $\epsilon$-differentially private within $\eta$ or \False{} with a counterexample $(\rho,\sigma)$ indicates  $\a$ is not $\epsilon$-differentially private within $\eta$.
    \State\label{line:call} $(\kappa^*,(\eta\psi+(1-\eta)\phi,\phi))=\text{DP}_{\kappa}(\a)$\Comment{Call Algorithm~\ref{Algorithm:kappa}}
    \If{$\epsilon\geq\ln[(\kappa^*-1)\eta+1$]}\label{line:compare}
    \State \Return \True{}
    \Else
    \State\Return \False{} and $(\eta\psi+(1-\eta)\phi,\phi)$
    \EndIf\label{line:the_end}
    \end{algorithmic}    
\end{algorithm}

The above calculations are also the main computational cost in Algorithms~\ref{Algorithm:kappa} and \ref{Algorithm:epsilon}, so the two algorithms   share the same complexity with Algorithm~\ref{Algorithm}.
 
\begin{theorem}\label{Thm:complexities}
The worst-case complexities of Algorithms~\ref{Algorithm}, ~\ref{Algorithm:kappa} and~\ref{Algorithm:epsilon}, are all $O(2^{5n}+2^{|\o|}|\o|2^{2n})$, where $n$ is the number of the qubits in quantum algorithms and $|\o|$ is the number of the measurement outcome set $\o$.
\end{theorem}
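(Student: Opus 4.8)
The plan is to carry out a direct, line-by-line cost accounting for each of the three algorithms and then observe that their dominant terms coincide. I would begin with Algorithm~\ref{Algorithm}, since the other two reuse its core computation. This algorithm has two expensive phases: the construction of the dual images $W_k=\e^\dagger(M_k)$ in Line~\ref{line:2}, and the subset loop that forms $M_\s$ and extracts its extreme eigenvalues (and, for $\s^*$, eigenvectors) in Lines~\ref{line:6} and~\ref{line:14}. Every other instruction (initialising $\delta^*,\s^*,M_{\s^*}$, the scalar comparisons, and the final assembly of the counterexample $(\eta\psi+(1-\eta)\phi,\phi)$) is cheap and will be absorbed into these two terms.

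For the first phase I would record that each $W_k=\sum_{j\in\j}E_j^\dagger M_k E_j$ is a sum of $|\j|$ triple products of $2^n\times 2^n$ matrices. Using the standard cubic bound $O(2^{3n})$ for one matrix multiplication, together with the fact that a super-operator on $\h$ admits at most $|\j|\le 2^{2n}$ Kraus matrices~\cite[Chapter 2.2]{wolf2012quantum}, computing the dual images costs $O(2^{2n}\cdot 2^{3n})=O(2^{5n})$, which is the first summand. For the second phase I would note that the loop ranges over the $2^{|\o|}$ subsets $\s\subseteq\o$; for each $\s$, forming $M_\s=\sum_{k\in\s}W_k$ adds at most $|\o|$ matrices at cost $O(2^{2n})$ apiece, and obtaining $\lambda_{\max}(M_\s)$, $\lambda_{\min}(M_\s)$ by the power method~\cite{doi:10.1137/1.9780898719581} costs $O(2^{2n})$ per matrix, since each iteration is a single matrix--vector product on a $2^n\times 2^n$ matrix. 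Summing over all subsets yields $O(2^{|\o|}|\o|2^{2n})$, the second summand, and adding the two phases gives the claimed bound for Algorithm~\ref{Algorithm}.

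Finally I would transfer the estimate to Algorithms~\ref{Algorithm:kappa} and~\ref{Algorithm:epsilon}. Algorithm~\ref{Algorithm:kappa} performs exactly the same dual-image construction and the same subset loop, merely replacing $\delta_\s$ by the ratio $\kappa(M_\s)=\lambda_{\max}(M_\s)/\lambda_{\min}(M_\s)$, which is read off from the identical pair of eigenvalues at no extra asymptotic cost, so its complexity is identical. Algorithm~\ref{Algorithm:epsilon} simply calls Algorithm~\ref{Algorithm:kappa} once (Line~\ref{line:call}) and then performs a single logarithmic comparison (Line~\ref{line:compare}), hence it inherits the same bound. This establishes the common complexity $O(2^{5n}+2^{|\o|}|\o|2^{2n})$ for all three.

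I expect the only genuinely delicate point to be the $O(2^{2n})$ charge for each eigenvalue computation: the power method's iteration count depends on the spectral gap, so this figure should be understood as the per-iteration cost at a fixed target precision (with the inverse or shifted power method used for $\lambda_{\min}(M_\s)$, noting $M_\s$ is positive semi-definite so $\lambda_{\max}$ is the dominant magnitude). Making this convention explicit is where I would be most careful, whereas the dimension bound $|\j|\le 2^{2n}$ is a clean cited fact that needs no further argument.
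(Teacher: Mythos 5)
Your accounting matches the paper's own proof essentially line for line: the $O(2^{5n})$ bound from the $|\j|\le 2^{2n}$ Kraus matrices times the $O(2^{3n})$ matrix multiplications, the $O(2^{|\o|}|\o|2^{2n})$ bound from the subset loop with power-method eigenvalue extraction, and the observation that Algorithms~\ref{Algorithm:kappa} and~\ref{Algorithm:epsilon} share the same dominant computations. The proposal is correct and takes the same route; your explicit caveat about the power method's per-iteration versus total cost is a reasonable clarification the paper leaves implicit.
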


\textit{\textbf{ Remark.}} As we can see in Theorem~\ref{Thm:complexities}, the main limitation of our verification algorithms is the exponential complexity in the number of qubits. To overcome this scaling issue, we apply optimization techniques based on tensor networks to capture the locality and regularity of quantum circuits. This allows us to speed up the calculations involved in verification. As a result, we are able to verify quantum algorithms with up to 21 qubits, as shown in the later experimental section. 

Further improving the scalability of verified qubits is possible by adapting classical approximation methods to the quantum domain, as they have successfully analyzed large-scale classical machine learning algorithms~\cite{albarghouthi2021introduction}. Two promising techniques are:
\begin{itemize}
\item Abstraction-based approximation using abstract interpretation provides over-approximations of concrete program semantics. If a property holds for the abstracted version, it also holds for the original. This technique has boosted verification scalability for classical neural network robustness ~\cite{elboher2020abstraction} and correctness of quantum circuits up to 300 qubits ~\cite{yu2021quantum}.

\item Bound-based approximation derives efficiently computable bounds on algorithm properties. If the algorithm satisfies the bound, it satisfies the property, but the converse is unknown. This has enabled robustness verification for large-scale classical neural networks ~\cite{lin2019robustness} and quantum classifiers ~\cite{guan2021robustness}. 
\end{itemize}
These approximation methods trade off formal guarantees for scalability in verifying algorithm properties. Since quantum algorithms rely on quantum circuits, we can follow similar approaches~\cite{yu2021quantum,guan2021robustness} to improve the scalability of verifying quantum differential privacy.

\section{Evaluation}\label{Sec:evaluation}


In this section, we evaluate the effectiveness and efficiency of our Algorithms on noisy quantum algorithms.
{\vskip 3pt}
{\bf Implementation:} We implemented our Algorithms on the top of Google's Python software libraries: \texttt{Cirq} for writing and manipulating quantum circuits, \texttt{TensorNetwork} for converting quantum circuits to tensor networks.
Our implementation  supports  circuit models not only written in \texttt{Cirq} but also  imported from IBM's \texttt{Qiskit}, and accepts quantum machine learning models from both TensorFlow Quantum and TorchQuantum.

{\vskip 3pt}
{\bf Optimization Techniques:} We convert quantum circuits into tensor networks, which is a data structure exploiting the regularity and locality contained in quantum circuits, while matrix representation cannot.
The multiplication of matrices in our algorithm is transformed into the contraction of tensor networks.
For the tensor network of a quantum circuit, its complexity of contraction is $T^{O(1)}\exp[O(qd)]$~\cite{markov2008simulating}, where 
$T$ is the number of gates (tensors), $d$ is the depth in the circuit (tensor network) and $q$ is the number of allowed interacting qubits, i.e., the maximal number of qubits (legs of a tensor)  a gate applying on.
So we can avoid the exponential complexity of the number $n$ of qubits with the cost of introducing exponential complexity of $qd$, where $d$ and $q$ capture the regularity and locality of the quantum circuit, respectively.
Usually, $q=2$  for controlled gates, and then the complexity turns out to be $T^{O(1)}\exp[O(d)]$.
Even though the worst-case (presented in the complexity) is exponential on $d$,  there are a bulk of efficient algorithms to implement tensor network contraction for practical large-size quantum circuits. As a result, we can handle (up to) 21 qubits in the verification experiments avoiding the worst-case complexities of our algorithms presented in Theorem~\ref{Thm:complexities} that the time cost is exponential with the number $n$ of qubits. For more details on tensor networks representing quantum circuits, we refer to~\cite{bridgeman2017hand}.
 

{\vskip 3pt}
{\bf Platform:} We conducted our experiments on a machine with Intel Xeon Platinum 8153 @ 2.00GHz × 256 Cores, 2048 GB Memory, and no dedicated GPU, running Centos 7.7.1908.

{\vskip 3pt}
{\bf Benchmarks:}
To evaluate the efficiency and utility of our implementation, we test our algorithms on four groups of examples, including quantum approximate
optimization algorithms, quantum supremacy algorithms,  variational quantum eigensolver algorithms and quantum machine learning models (well-trained algorithms) for solving classical tasks with angle encoding introduced in Section~\ref{sec:preliminary}. All of them have been implemented on current NISQ computers.


\subsection{Quantum Approximate Optimization Algorithms}
The Quantum Approximate Optimization Algorithm (QAOA) is a quantum algorithm for producing approximate solutions for combinatorial optimization problems~\cite{farhi2014quantum}. Fig.~\ref{fig:qaoa}. shows a 2-qubit example of QAOA circuit. In our experiment, we use the circuit for hardware grid problems in ~\cite{harrigan2021quantum} generated from code in Recirq~\cite{quantum_ai_team_and_collaborators_2020_4091470}. Circuit name \emph{qaoa\_$D$} represents such a QAOA circuit with $D$ connected qubits on  Google's \textit{Sycarmore} quantum processor. 

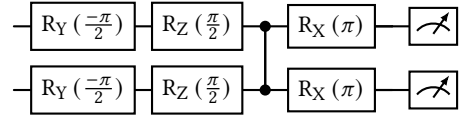
\begin{figure}[!ht]
    \centering
    \begin{quantikz}[row sep=2.2mm, column sep=2.3mm,]
        & \gate{\mathrm{R_Y}\,(\mathrm{\frac{-\pi}{2}})} & \gate{\mathrm{R_Z}\,(\mathrm{\frac{\pi}{2}})} & \ctrl{1} & \gate{\mathrm{R_X}\,(\mathrm{\pi})} & \qw & \meter{}\\
        & \gate{\mathrm{R_Y}\,(\mathrm{\frac{-\pi}{2}})} & \gate{\mathrm{R_Z}\,(\mathrm{\frac{\pi}{2}})} & \ctrl{-1} & \gate{\mathrm{R_X}\,(\mathrm{\pi})} & \qw & \meter{}
        \end{quantikz}
        \vskip -5pt
        \caption{A 2-qubit QAOA circuit.}
        \label{fig:qaoa}
\end{figure}

\subsection{Variational Quantum Eigensolver Algorithms}
The circuit of Variational Quantum Eigensolver (VQE) Algorithms comes from the experiments in \cite{google2020hartree}, which uses  Google's \textit{Sycarmore} quantum processor to calculate the binding energy of hydrogen chains. Fig.~\ref{fig:vqe}. shows an 8-qubit basis rotation circuit for $H_8$ used in the VQE algorithm. In our experiment, VQE circuit is obtained from Recirq and named \emph{hf\_$E$} with $E$ being the number of qubits.

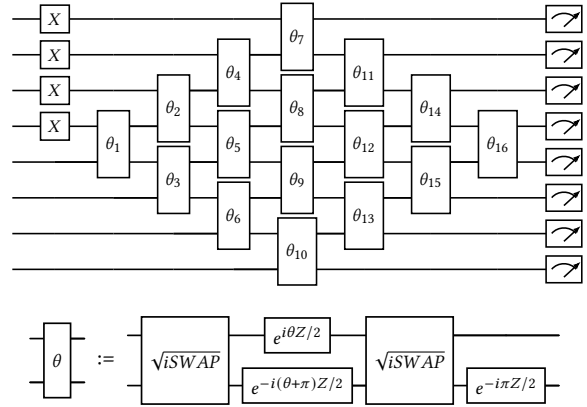
\begin{figure}[!ht]
    \centering
    \subcaptionbox*{}[\linewidth]{
     \scalebox{0.75}{\tikzinput{hatree_fock}{tikz/hatree_fock}}}
         \vskip -10pt
 \subcaptionbox*{}[\linewidth]{
         \centering
         \scalebox{0.8}{\begin{quantikz}[row sep=2.2mm, column sep=2.3mm,]
             & \gate[2]{\theta} &\qw\\
             & \qw &\qw
             \end{quantikz}}
             :=
             \scalebox{0.8}{\begin{quantikz}[row sep=2.2mm, column sep=2.3mm,]
             & \gate[2]{\sqrt{iSWAP}}  & \gate{e^{i\theta Z/2}}                &\gate[2]{\sqrt{iSWAP}} &\qw                   & \qw   \\
             &                         & \gate{e^{-i(\theta + \pi) Z/2}}       &\qw                    &\gate{e^{-i \pi Z/2}} &\qw   
         \end{quantikz}}
         }
    \vskip -20pt
    \caption{An 8-qubit Hatree-Fock VQE circuit.}
    \label{fig:vqe}
\end{figure}

\subsection{Quantum Supremacy Algorithms}
The quantum supremacy algorithm  includes a specific random circuit designed to show the quantum supremacy on grid qubits~\cite{boixo2018characterizing}. In general, the circuit contains a number of cycles consisting of 1-qubit ($X^{1/2}$, $Y^{1/2}$ and $T$ gate) and 2-qubit quantum gates (CZ gate). The 2-qubit gates are implemented in a specific order according to the topology of the grid qubits, where each qubit in the middle of the circuit is connected to four qubits, and the qubits on the edges and corners are connected to three and two qubits, respectively. The circuit is implemented on Google \emph{Sycarmore} quantum processor to show the quantum supremacy~\cite{arute2019quantum}. In our experiment, the circuits are named by \emph{inst\_$A\times B$\_$C$}, representing an $(A \times B)$-qubit circuit with depth $C$. See Fig.~\ref{examples:b} for an example of $2\times 2$-qubit quantum supremacy algorithms.

\subsection{Quantum Machine Learning Models}
There are two frameworks, TensorFlow Quantum and TorchQuantum, which are based on famous machine learning platforms --- TensorFlow and Pytorch, respectively, for training and designing quantum machine learning models.
TensorFlow Quantum uses \texttt{Cirq} to manipulate quantum circuits, and so does our implementation.
TorchQuantum supports the conversion of models into quantum circuits described by \texttt{Qiskit}, which can also be converted to \texttt{Cirq} by our implementation.
Thus, our implementation is fully compatible with both TensorFlow Quantum and TorchQuantum.

We collect two quantum machine learning models using Tensorflow Quantum for financial tasks, as described in~\cite{guan2022verifying}. All classical financial data are encoded into quantum states using the angle encoding method introduced in Section~\ref{sec:preliminary}.
\begin{itemize}
    \item The model named \emph{GC\_9}, trained on public German credit card dataset~\cite{Dua:2019}, is used to classify whether a person has a good credit.

    \item The model named \emph{AI\_8}, trained on public adult income dataset~\cite{dice2020}, is used to predict whether an individual's income exceeds $\$50,000/\text{year}$ or not.
\end{itemize}
Additionally, we train a model called \emph{EC\_9} to detect fraudulent credit card transactions. The model is trained on a dataset of European cardholder transactions~\cite{ec}.

Furthermore, we evaluate two quantum machine learning models from the TorchQuantum library paper~\cite{wang2022quantumnas}, which introduces a PyTorch framework for hybrid quantum-classical machine learning.
\begin{itemize}
    \item The model \emph{MNIST\_10}, trained on MNIST~\cite{lecun-mnisthandwrittendigit-2010}, is used to classify handwritten digits.
    \item The model \emph{Fashion\_4}, trained on Fashion MNIST~\cite{xiao2017/online}, is used to classify fashion images.
\end{itemize}
As before, handwritten digits and fashion images are encoded into quantum states via angle encoding.

\subsection{Differential Privacy Verification and Analysis}
{\bf Verification Algorithms:} As shown in~Theorem~\ref{Thm:complexities}, the complexities of our Algorithms~\ref{Algorithm}, ~\ref{Algorithm:kappa} and ~\ref{Algorithm:epsilon} are the same, so for convenience, we only test the implementation of Algorithm~\ref{Algorithm:kappa} since it only requires quantum algorithms as the input without factors $\epsilon,\delta,\eta$ for verifying differential privacy.
In addition, to demonstrate the noisy  impact on quantum algorithms in the NISQ era, we add two types of quantum noises --- depolarizing and bit flip with different levels of probability to each qubit in all circuits of our examples.
Then we run Algorithm~\ref{Algorithm:kappa} to evaluate the maximum condition number $\kappa^*$ of all examples.
The evaluation results are summarized in Tables~\ref{table:qaoa}-~\ref{table:qml}.
It can be seen that the higher level of noise’s probability, the smaller value of the maximum condition number $\kappa^*$.
So similarly to protecting classical differential privacy by adding noises, quantum algorithms also benefit from quantum noises on circuits in terms of quantum differential privacy.  It is worth noting that in all experiments, we also obtain differential privacy counterexamples by Algorithm~\ref{Algorithm:kappa} at the running time presented in the tables, but as they are large-size (up to $2^{21}\times 2^{21}$) matrices, we do not show them here.

\begin{table}[!ht]
    \centering
    \caption{Experimental results of the maximum condition number $\kappa^*$ on \textit{Quantum Approximate Optimization
    Algorithms} with different noise levels.}\label{table:qaoa}
    \small
    \vskip -10pt
    \begin{tabular}[c]{cccccc}
        \toprule \bfseries Circuit & \bfseries \#Qubits & \bfseries Noise Type & $\bm{p}$ & $\bm{\kappa^{*}}$ & \bfseries Time (s) \\
        \hline
        \multirow{4}{*}{\emph{qaoa\_20}} & \multirow{4}{*}{20} & \multirow{2}{*}{depolarizing} & 0.01 & 62.39 & 285.80 \\
         & & & 0.001 & 747.21 & 312.38 \\
         & & \multirow{2}{*}{bit flip} & 0.01 & 88.53 & 220.73 \\
         & & & 0.001 & 852.94 & 216.86 \\
        \hline
        \multirow{4}{*}{\emph{qaoa\_21}} & \multirow{4}{*}{21} & \multirow{2}{*}{depolarizing} & 0.01 & 97.58 & 644.51 \\
         & & & 0.001 & 1032.48 & 514.83 \\
         & & \multirow{2}{*}{bit flip} & 0.01 & 91.27 & 583.85 \\
         & & & 0.001 & 923.85 & 594.24 \\
        \toprule
    \end{tabular}
\end{table}

\begin{table}[!ht]
    \centering
    \caption{Experimental results of the maximum condition number $\kappa^*$ on \textit{Variational Quantum Eigensolver
    Algorithms} with different noise levels.}\label{table:vqe}
    \small
    \vskip -10pt
    \begin{tabular}[c]{cccccc}
        \toprule \bfseries Circuit & \bfseries \#Qubits & \bfseries Noise Type & $\bm{p}$ & $\bm{\kappa^{*}}$ & \bfseries Time (s) \\
        \hline
        \multirow{4}{*}{\emph{hf\_8}} & \multirow{4}{*}{8} & \multirow{2}{*}{depolarizing} & 0.01 & 135.50 & 277.37 \\
         & & & 0.001 & 1412.58 & 212.06 \\
         & & \multirow{2}{*}{bit flip} & 0.01 & 98.39 & 248.36 \\
         & & & 0.001 & 991.73 & 259.37 \\
        \hline
        \multirow{4}{*}{\emph{hf\_10}} & \multirow{4}{*}{10} & \multirow{2}{*}{depolarizing} & 0.01 & 132.21 & 477.70 \\
         & & & 0.001 & 1423.75 & 482.10 \\
         & & \multirow{2}{*}{bit flip} & 0.01 & 97.64 & 409.25 \\
         & & & 0.001 & 988.26 & 427.58 \\
        \hline
        \multirow{4}{*}{\emph{hf\_12}} & \multirow{4}{*}{12} & \multirow{2}{*}{depolarizing} & 0.01 & 140.58 & 955.22 \\
         & & & 0.001 & 1438.94 & 962.34 \\
         & & \multirow{2}{*}{bit flip} & 0.01 & 95.27 & 890.26 \\
         & & & 0.001 & 978.87 & 816.83 \\
        \toprule
    \end{tabular}
\end{table}

\begin{table}[!ht]
    \centering
    \caption{Experimental results of the maximum condition number $\kappa^*$ on \textit{Quantum Supremacy Algorithms} with different noise levels.}\label{table:inst}
    \small
    \vskip -10pt
    \begin{tabular}[c]{cccccc}
        \toprule \bfseries Circuit & \bfseries \#Qubits & \bfseries Noise Type & $\bm{p}$ & $\bm{\kappa^{*}}$ & \bfseries Time (s) \\
        \hline
        \multirow{4}{*}{\emph{inst\_4x4\_10}} & \multirow{4}{*}{16} & \multirow{2}{*}{depolarizing} & 0.01 & 59.67 & 254.05 \\
         & & & 0.001 & 748.51 & 247.42 \\
         & & \multirow{2}{*}{bit flip} & 0.01 & 82.39 & 207.39 \\
         & & & 0.001 & 901.74 & 213.18 \\
        \hline
        \multirow{4}{*}{\emph{inst\_4x5\_10}} & \multirow{4}{*}{20} & \multirow{2}{*}{depolarizing} & 0.01 & 62.05 & 13176.98 \\
         & & & 0.001 & 823.85 & 7493.24 \\
         & & \multirow{2}{*}{bit flip} & 0.01 & 88.72 & 8120.35 \\
         & & & 0.001 & 918.87 & 8203.71 \\
        \toprule
    \end{tabular}
\end{table}

\begin{table}[!ht]
    \centering
    \caption{Experimental results of the maximum condition number $\kappa^*$ on various \textit{Quantum Machine Learning Models} with different noise levels.}\label{table:qml}
    {\vskip -10pt}
    \small
    \begin{tabular}[c]{cccccc}
        \toprule \bfseries Circuit & \bfseries \#Qubits & \bfseries Noise Type & $\bm{p}$ & $\bm{\kappa^{*}}$ & \bfseries Time (s) \\
        \hline
        \multirow{4}{*}{\emph{EC\_9}}
        & \multirow{4}{*}{9} & \multirow{2}{*}{depolarizing} & 0.01    & 3.370     & 5.49 \\
         & & & 0.001   & 32.199    & 3.61 \\
         & & \multirow{2}{*}{bit flip} & 0.01    & 3.144     & 3.95 \\
         & & & 0.001   & 29.466    & 3.85 \\
        \hline
        \multirow{4}{*}{\emph{GC\_9}} & \multirow{4}{*}{9} & \multirow{2}{*}{depolarizing} & 0.01 & 4.236 & 5.12 \\
         & & & 0.001 & 41.077 & 3.92 \\
         & & \multirow{2}{*}{bit flip} & 0.01 & 4.458 & 4.09 \\
         & & & 0.001 & 42.862 & 3.80 \\
        \hline
        \multirow{4}{*}{\emph{AI\_8}} & \multirow{4}{*}{8} & \multirow{2}{*}{depolarizing} & 0.01 & 4.380 & 3.54 \\
         & & & 0.001 & 42.258 & 2.58 \\
         & & \multirow{2}{*}{bit flip} & 0.01 & 5.025 & 2.20 \\
         & & & 0.001 & 50.108 & 2.44 \\
        \hline
        \multirow{4}{*}{\emph{Mnist\_10}} & \multirow{4}{*}{10} & \multirow{2}{*}{depolarizing} & 0.01 & 1.170 & 18.90 \\
         & & & 0.001 & 7.241 & 17.44 \\
         & & \multirow{2}{*}{bit flip} & 0.01 & 1.132 & 17.39 \\
         & & & 0.001 & 6.677 & 17.14 \\
        \hline
        \multirow{4}{*}{\emph{Fashion\_4}} & \multirow{4}{*}{4} & \multirow{2}{*}{depolarizing} & 0.01 & 1.052 & 3.29 \\
         & & & 0.001 & 5.398 & 3.18 \\
         & & \multirow{2}{*}{bit flip} & 0.01 & 1.057 & 3.26 \\
         & & & 0.001 & 5.635 & 3.27 \\
        \toprule
    \end{tabular}
\end{table}

 
\begin{figure}[!ht]
    \centering
    
    \subcaptionbox{The optimal bound function $\epsilon^*(\eta)$ for \emph{hf\_12}.}[\linewidth]{
        \tikzinput{hf_dp}{tikz/hf_dp.tex}
    }
    
    \subcaptionbox{The optimal bound function $\epsilon^*(\eta)$ for \emph{hf\_12}.}[\linewidth]{
        \tikzinput{hf_bf}{tikz/hf_bf.tex}
    }
    
    \caption{Comparison of $\epsilon$-differential privacy on \textit{Variational Quantum Eigensolver
Algorithms} with different noise levels.}\label{fig_VQE}
\end{figure}

\begin{figure}[!ht]
    \centering
    
    \subcaptionbox{The optimal bound function $\epsilon^*(\eta)$ for \emph{QAOA\_21}.}[\linewidth]{
        \tikzinput{qaoa_dp}{tikz/qaoa_dp.tex}
    }
    
    \subcaptionbox{The optimal bound function $\epsilon^*(\eta)$ for \emph{QAOA\_21}.}[\linewidth]{
        \tikzinput{qaoa_bf}{tikz/qaoa_bf}
    }
    
    \caption{Comparison of $\epsilon$-differential privacy on \textit{Quantum Approximate Optimization
Algorithms} with different noise levels.}\label{fig_QAOA}
\end{figure}

\begin{figure}[!ht]
    \centering
    
    \subcaptionbox{The optimal bound function $\epsilon^*(\eta)$ for \emph{inst\_4x5\_10}.}[\linewidth]{
        \tikzinput{inst_dp}{tikz/inst_dp}
    }
    
    \subcaptionbox{The optimal bound function $\epsilon^*(\eta)$ for \emph{inst\_4x5\_10}.}[\linewidth]{
        \tikzinput{inst_bf}{tikz/inst_bf}
    }\hfill

    \caption{Comparison of $\epsilon$-differential privacy on \textit{Quantum Supremacy Algorithms} with different noise levels.}
\end{figure}

\begin{figure*}[!ht]
    \centering
    \pgfplotsset{width=5.9cm,height=4.2cm,legend style={opacity=0.6},y label style={yshift=-5pt}}

    \hfill
    \subcaptionbox{The optimal bound function $\epsilon^*(\eta)$ for \emph{MNIST\_10}.}[0.31\linewidth]{
        \tikzinput{mnist_dp}{tikz/mnist_dp}
    }
    \hfill
    \subcaptionbox{The optimal bound function $\epsilon^*(\eta)$ for \emph{GC\_9}.}[0.31\linewidth]{
        \tikzinput{gc_dp}{tikz/gc_dp}
    }
    \hfill
    \subcaptionbox{The optimal bound function $\epsilon^*(\eta)$ for \emph{EC\_9}.}[0.31\linewidth]{
        \tikzinput{ec_dp}{tikz/ec_dp}
    }
    \hfill
    
    \hfill
    \subcaptionbox{ The optimal bound function $\epsilon^*(\eta)$ for  \emph{MNIST\_10}.}[0.31\linewidth]{
        \tikzinput{mnist_bf}{tikz/mnist_bf}
    }
    \hfill
    \subcaptionbox{The optimal bound function $\epsilon^*(\eta)$ for \emph{GC\_9}.}[0.31\linewidth]{
        \tikzinput{gc_bf}{tikz/gc_bf}
    }
    \hfill
    \subcaptionbox{The optimal bound function $\epsilon^*(\eta)$ for \emph{EC\_9}.}[0.31\linewidth]{
        \tikzinput{ec_bf}{tikz/ec_bf}
    }
    \hfill

    \caption{Comparison of $\epsilon$-differential privacy on various \textit{Quantum Machine Learning Models} with different noise levels.}\label{fig_QML}
\end{figure*}

{\vskip 3pt}

{\bf Optimal Bound Function $\epsilon^*(\eta)$:} After the above verification process, we have the values of $\kappa^*$ for all experiments. We choose the one in every kind of experiment with the largest qubits as the benchmark to depict  the optimal bound function $\epsilon^*(\eta)$ in Figs.~\ref{fig_VQE}-\ref{fig_QML}, respectively. At the same time, we add more noise levels to further explore the tendency of the optimal bound function $\epsilon^*(\eta)$. All experimental results confirm that the quantum noises on input states  can logarithmically enhance  the differential privacy as we claimed before.  Furthermore, as quantum differential privacy protects the privacy of encoded classical data, as shown in Example~\ref{Exa:angle_encoding}, introducing quantum noise can further enhance the differential privacy of the encoded data, much like how adding classical noise improves the privacy of original classical data~\cite{dwork2014algorithmic}.

\section{Conclusion}

In this paper, we established a formal framework for detecting violations of  differential privacy for quantum algorithms. In particular, we developed an algorithm to not only verify whether or not a quantum algorithm is differentially private but also provide counterexamples when the privacy is unsatisfied. The counterexample consists of a pair of quantum states violating the privacy to reveal the cause of the violation. For practicability, we implemented our algorithm on TensorFlow Quantum and TorchQuantum, the quantum extensions of famous machine learning platforms --- TensorFlow and PyTorch, respectively. Furthermore, for scalability,  we adapted Tensor Networks (a highly efficient data structure) in our algorithm to overcome the state explosion (the complexity of the algorithm is exponential with the number of qubits) such that the practical performance of our algorithm can be improved. The effectiveness and efficiency of our algorithm were tested by numerical experiments on a bulk of quantum algorithms ranging from quantum supremacy (beyond classical computation) algorithms to quantum machine learning models with up to 21 qubits, which all have been implemented on current quantum hardware devices. The experimental results showed that quantum differential privacy can benefit from adding quantum noises on either quantum circuits or input states, which is consistent with the obtained theoretical results presented as Theorem~\ref{Thm:verifier}.  

For future works, extending the techniques developed for quantum algorithms in this paper to verify the differential privacy for quantum databases is an interesting research topic for protecting the privacy of quantum databases. As we discussed in  Section~\ref{Sec:definition}, the neighboring relation for defining the differential privacy of quantum databases is the reachability between two quantum states by performing a quantum operation
(super-operator) on a single quantum bit only~\cite{aaronson2019gentle}, while that for our setting in this paper is the trace distance of two quantum states. Due to this fundamental difference in the neighboring relation,  additional extensions will be required such as developing a reachability-based search algorithm to find the violations of the differential privacy for quantum databases. Another challenging research line is to study how to train a quantum machine learning algorithm with a differential privacy guarantee. This has been done for classical machine learning algorithms~\cite{abadi2016deep}, but untouched at all for quantum algorithms. 
\section*{Acknowledgments}
This work was partly supported by the Youth Innovation Promotion Association CAS, the National Natural Science Foundation of China (Grant No. 61832015), the Young Scientists Fund of the National Natural Science Foundation of China (Grant No. 62002349), the Key Research Program of the Chinese Academy of Sciences (Grant No. ZDRW-XX-2022-1).



\bibliographystyle{unsrt} 
\balance
\bibliography{main}


\appendix
\onecolumn
\section*{Appendix}
We begin with the proof of Theorem~\ref{Thm:verifier} as it will be used to prove the others. 
\section{The Proof of Theorem~\ref{Thm:verifier}}
\begin{proof}
    For the first claim, by the definition of differential privacy in Definition~\ref{def:DP}, we have that for all quantum states $\rho,\sigma\in\dh$ with $D(\rho,\sigma)\leq \eta$ and $\s\subseteq \o$, 
    \begin{equation*}
        \begin{aligned}
        &\sum_{k\in\s}\tr(M_{k}\e(\rho))\leq \exp(\epsilon)\sum_{k\in\s}\tr(M_{k}\e(\sigma))+\delta\\
        \Leftrightarrow{} & \delta \geq \sum_{k\in\s}\tr(\e^\dagger(M_k)(\rho-e^\epsilon\sigma))\\
        \Leftrightarrow{} & \delta \geq \tr(M_\s(\rho-e^\epsilon\sigma)).
        \end{aligned}
    \end{equation*}
    By the arbitrariness of $ \s$, the above inequality holds if and only if for any   $\s\subseteq\o$, we have $\delta\geq\delta_{\s}$, where 
    \begin{equation}\label{eq:deltaS}
        \begin{aligned}
            \delta_{\s}=\sup_{\rho,\sigma\text{ with }D(\rho,\sigma)\leq \eta}\tr(M_\s(\rho-e^\epsilon\sigma)).
        \end{aligned}
    \end{equation}
    Next, we claim that 
    $$\delta_{\s}= \eta\lambda_{\max}(M_{\s})-(e^\epsilon+\eta-1)\lambda_{\min}(M_{\s}).$$
    
    First, let $\ket{\psi}$ and $\ket{\phi}$ be two normalized eigenvectors of $M_{\s}$ corresponding to the maximum and minimum eigenvalues, respectively, and 
    \[\gamma=\eta\ketbra{\psi}{\psi}+(1-\eta)\ketbra{\phi}{\phi}\qquad \phi=\ketbra{\phi}{\phi}.\]
    Then, by the arbitrariness of $\rho, \sigma$ and $D(\gamma,\phi)=\eta$, we have that 
    \begin{equation*}
        \begin{aligned}\delta_{\s}\geq&\tr(M_{\s}(\gamma-e^{\epsilon}\phi))\\
        =&\eta\tr(M_{\s}\psi)+(1-\eta-e^{\epsilon})\tr(M_{\s}\phi)\\
        =&\eta\lambda_{\max}(M_{\s})-(e^\epsilon+\eta-1)\lambda_{\min}(M_{\s}).
        \end{aligned}
    \end{equation*}
        
    On the other hand, for any quantum states $\rho,\sigma\in\dh$ with $D(\rho,\sigma)\leq \eta$, let $\rho-\sigma=\Delta_{+}-\Delta_{-}$ be a decomposition into orthogonal positive and negative parts (i.e., $\Delta_{\pm}\geq 0$ and $\Delta_{+}\Delta_{-}=0$). Then we have $\tr(\Delta_+)=\tr(\Delta_{-})$ because $0=\tr(\rho-\sigma)=\tr(\Delta_{+}-\Delta_{-})$, and then $\frac{1}{2}\tr(|\rho-\sigma|)=\tr(\Delta_{+})$. Therefore, $D(\rho,\sigma)=\tr(\Delta_{+})\leq\eta$. Furthermore, 
    \begin{equation*}
        \begin{aligned}                    \delta_{\s}=&\sup_{\rho,\sigma\text{ with }D(\rho,\sigma)\leq \eta}\tr(M_\s(\rho-e^\epsilon\sigma))\\
        =&\sup_{\tr(\Delta_{-})=\tr(\Delta_{+})\leq \eta }\tr[M_\s(\Delta_{+}-\Delta_{-}+\sigma-e^\epsilon\sigma)] \\  
        =&\sup_{\tr(\Delta_{-})=\tr(\Delta_{+})\leq \eta }\tr(M_\s\Delta_{+})-\tr(M_\s\Delta_{-})-(e^\epsilon-1)\tr(M_\s\sigma) \\ 
        =&\sup_{\tr(\Delta_{-})=\tr(\Delta_{+})\leq \eta }\tr(\Delta_{+})\tr(M_\s\frac{\Delta_{+}}{\tr(\Delta_{+})})\\
        &-\tr(\Delta_{-})\tr(M_\s\frac{\Delta_{-}}{\tr(\Delta_{-})})-(e^\epsilon-1)\tr(M_\s\sigma) \\ 
        \leq &\sup_{\tr(\Delta_{-})=\tr(\Delta_{+})\leq \eta}\tr(\Delta_{+})\max_{\rho_1\in\dh}\tr(M_{\s}\rho_1)-\tr(\Delta_{-})\min_{\rho_2\in\dh}\tr(M_{\s}\rho_2)-(e^\epsilon-1)\min_{\rho_3\in\dh}\tr(M_{\s}\rho_3)\\
        =&\sup_{\tr(\Delta_{-})=\tr(\Delta_{+})\leq \eta}\tr(\Delta_{+})[\lambda_{\max}(M_\s)-\lambda_{\min}(M_\s)]-(e^\epsilon-1)\lambda_{\min}(M_{\s})\\
        \leq &\eta\lambda_{\max}(M_{\s})-(e^\epsilon+\eta-1)\lambda_{\min}(M_{\s}).
        \end{aligned}
    \end{equation*}
    In summary, $\delta_{\s}=\eta\lambda_{\max}(M_{\s})-(e^\epsilon+\eta-1)\lambda_{\min}(M_{\s})$. With this, we can complete the first claim in the theorem. 

    For proving the second one, we only need to set $\delta=0$ in the first claim and solve the inequalities $0\geq \delta_{\s}$ for any $\s\subseteq \o$.
\end{proof}

\section{The Proof of Theorem~\ref{Thm:post-processing}}
\begin{lemma}\label{lem:inequality}
    Let $\e$  be a super-operator on Hilbert space $\h$. Then for any positive semi-definite matrix $M$, we have 
    \[\lambda_{\min}(M)\leq \lambda_{\min}(\e^\dagger(M))\leq \lambda_{\max}(\e^\dagger(M))\leq\lambda_{\max}(M).\]
\end{lemma}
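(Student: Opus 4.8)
The plan is to exploit two structural features of the dual map $\e^\dagger$: that it is \emph{positive} (monotone with respect to the Loewner order) and that it is \emph{unital}. Once these are established, the eigenvalue bounds follow by sandwiching $M$ between scalar multiples of the identity and pushing that sandwich through $\e^\dagger$, so the real content is packaged into two one-line structural observations.

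First I would record unitality. Since $\e$ is a super-operator, it is trace-preserving, so its Kraus matrices satisfy $\sum_{k\in\k} E_k^\dagger E_k = I_n$; hence $\e^\dagger(I_n) = \sum_{k\in\k} E_k^\dagger I_n E_k = \sum_{k\in\k} E_k^\dagger E_k = I_n$. Next I would note positivity: for any $A\geq 0$ each summand satisfies $\bra{\psi}E_k^\dagger A E_k\ket{\psi} = \bra{E_k\psi}A\ket{E_k\psi}\geq 0$, so $E_k^\dagger A E_k\geq 0$ and therefore $\e^\dagger(A)=\sum_{k\in\k}E_k^\dagger A E_k\geq 0$. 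By linearity this means $A\leq B$ implies $\e^\dagger(A)\leq \e^\dagger(B)$.

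The core step is the sandwich. For the positive semi-definite matrix $M$ we have the operator inequalities $\lambda_{\min}(M)\,I_n \leq M \leq \lambda_{\max}(M)\,I_n$, which follow from the variational characterization of the extreme eigenvalues. Applying the positive, unital map $\e^\dagger$ and using linearity together with $\e^\dagger(I_n)=I_n$ gives $\lambda_{\min}(M)\,I_n \leq \e^\dagger(M) \leq \lambda_{\max}(M)\,I_n$. Reading off the extreme eigenvalues from the two one-sided bounds yields $\lambda_{\max}(\e^\dagger(M)) \leq \lambda_{\max}(M)$ and $\lambda_{\min}(M) \leq \lambda_{\min}(\e^\dagger(M))$, while $\lambda_{\min}(\e^\dagger(M)) \leq \lambda_{\max}(\e^\dagger(M))$ is immediate; chaining the three inequalities gives the claim.

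I do not expect a genuine obstacle here: the argument is essentially a monotone-map computation. The only point requiring care is that it is \emph{unitality} of the dual map $\e^\dagger(I_n)=I_n$ (rather than trace-preservation of $\e$ itself) that both one-sided bounds consume, so I would make sure to invoke $\sum_{k\in\k}E_k^\dagger E_k = I_n$ in the form acting on $I_n$ under $\e^\dagger$. The passage between one-sided operator inequalities $c\,I_n\leq A$ (resp.\ $A\leq c\,I_n$) and eigenvalue bounds on Hermitian $A$ is standard via Rayleigh quotients and needs no further justification.
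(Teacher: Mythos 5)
Your proof is correct, and it takes a genuinely different (though dual) route from the paper's. The paper argues on the state side: it uses the variational characterization $\lambda_{\max}(M)=\max_{\rho\in\dh}\tr(M\rho)$, $\lambda_{\min}(M)=\min_{\rho\in\dh}\tr(M\rho)$, the identity $\tr(\e^\dagger(M)\rho)=\tr(M\e(\rho))$, and the fact that the image $\{\e(\rho):\rho\in\dh\}$ is a subset of $\dh$, so the max can only shrink and the min can only grow when the extremization is restricted to that image. You argue on the observable side: $\e^\dagger$ is positive and unital, so the Loewner sandwich $\lambda_{\min}(M)\,I\leq M\leq\lambda_{\max}(M)\,I$ pushes through to $\lambda_{\min}(M)\,I\leq\e^\dagger(M)\leq\lambda_{\max}(M)\,I$. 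These consume exactly the same hypotheses (trace preservation of $\e$ is unitality of $\e^\dagger$; positivity of $\e$ is positivity of $\e^\dagger$), and each is a two-line standard argument. The paper's version has the minor advantage of staying in the language of output probabilities $\tr(M\e(\rho))$ that the surrounding Theorem~\ref{Thm:verifier} is phrased in, which makes the application to post-processing immediate; yours is slightly more self-contained at the operator level and generalizes verbatim to any positive unital map without reference to density operators. One small point of rigor you handled correctly and that is worth keeping explicit: it is the unitality of the \emph{dual} map, $\e^\dagger(I)=\sum_k E_k^\dagger E_k=I$, that both one-sided bounds use, not trace preservation of $\e$ directly.
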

\begin{proof}
    First, we have that 
    \[\lambda_{\max}(M)=\max_{\rho\in\dh}\tr(M\rho) \text{ and  } \lambda_{\min}(M)=\min_{\rho\in\dh}\tr(M\rho).\]
    Thus, 
    \begin{equation*}
        \begin{aligned}
            &\lambda_{\max}(\e^{\dagger}(M))=\max_{\rho\in\dh}\tr(\e^{\dagger}(M)\rho) \\ &\lambda_{\min}(\e^{\dagger}(M))=\min_{\rho\in\dh}\tr(\e^{\dagger}(M)\rho).
        \end{aligned}
    \end{equation*}
    Furthermore, as $\tr(\e^{\dagger}(M)\rho)=\tr(M\e(\rho))$,
    \begin{equation*}
        \begin{aligned}
            &\lambda_{\max}(\e^{\dagger}(M))=\max_{\rho\in\dh}\tr(M\e(\rho)) \\ &\lambda_{\min}(\e^{\dagger}(M))=\min_{\rho\in\dh}\tr(M\e(\rho)).
        \end{aligned}
    \end{equation*}
    Let $\s=\{\e(\rho)|\rho\in\dh\}$. Then 
    \[\lambda_{\max}(\e^{\dagger}(M))=\max_{\rho\in\s}\tr(M\rho) \text{ and  } \lambda_{\min}(\e^{\dagger}(M))=\min_{\rho\in\s}\tr(M\rho).\]
    Because of $\s\subseteq\dh$, we obtain that $\lambda_{\min}(M)\leq \lambda_{\min}(\e^\dagger(M))$ and $\lambda_{\max}(\e^\dagger(M))\leq\lambda_{\max}(M)$, completing the proof. 
\end{proof}
Now  we can prove Theorem~\ref{Thm:post-processing}.
\begin{proof}
    It follows from Theorem~\ref{Thm:verifier} and Lemma~\ref{lem:inequality} by noting that $\lambda_{\max}(\u^\dagger(M))=\lambda_{\max}(M)$ and $\lambda_{\min}(\u^\dagger(M))=\lambda_{\min}(M)$ for any positive semi-definite matrix $M$.
\end{proof}

\section{The Proof of Theorem~\ref{Thm:composition}}
\begin{proof}
First, we prove Claim 1.
For $k\in\{1,2\},$ we redefine  ${M}_{k,\s_k}=\sum_{j_k\in \s_k}\e_k^{\dagger}(M_{k,j_k})$.
   Then by Theorem~\ref{Thm:verifier}, 
    \[\max_{\s_1\subseteq \o_1}\kappa(M_{1,\s_1})\leq \frac{e^{\epsilon_1} +\eta_1 -1}{\eta_1}\qquad \max_{\s_2\subseteq \o_1}\kappa(M_{2,\s_2})\leq \frac{e^{\epsilon_2} +\eta_2 -1}{\eta_2}.\]
    With the above two inequalities, we have 
    \begin{equation*}
       \begin{aligned}
            &\max_{\s_1\subseteq \o_1}\max_{\s_2\subseteq \o_2}\kappa(M_{1,\s_1})\kappa(M_{2,\s_2})\\
           \leq&\frac{e^{\epsilon_1} +\eta_1 -1}{\eta_1}\frac{e^{\epsilon_2} +\eta_2 -1}{\eta_2}\\
           =&\frac{e^{\epsilon_1+\epsilon_2}+(\eta_1-1)(\eta_2-1)+e^{\epsilon_1}(\eta_2-1)+e^{\epsilon_2}(\eta_1-1)}{\eta_1\eta_2}\\
           \leq &\frac{e^{\epsilon_1+\epsilon_2}+(\eta_1-1)(\eta_2-1)+\eta_1+\eta_2-2}{\eta_1\eta_2}\\        =&\frac{e^{\epsilon_1+\epsilon_2}+\eta_1\eta_2-1}{\eta_1\eta_2}.
        \end{aligned}
    \end{equation*}
   The last inequality results from $e^\epsilon\geq 1$ for any $\epsilon\geq 0$ and $\eta_1, \eta_2\leq 1$. 


Next, we prove Claim 2.

For any subsets $\s_1\subseteq \o_1$ and $\s_2\subseteq \o_2$, we have
    \begin{equation*}
        \begin{aligned}
           &\eta_{1}\eta_2\lambda_{\max}(M_{1,\s_1}\otimes M_{2,\s_2})-(e^{\epsilon_1+\epsilon_2}+\eta_1\eta_2-1)\lambda_{\min}(M_{1,\s_1}\otimes M_{2,\s_2})\\
           =&\eta_{1}\eta_2\lambda_{\max}(M_{1,\s_1})\lambda_{\max} (M_{2,\s_2})-(e^{\epsilon_1+\epsilon_2}+\eta_1\eta_2-1)\lambda_{\min}(M_{1,\s_1})\lambda_{\min}( M_{2,\s_2})\\
           \leq &\eta_{1}\eta_2\lambda_{\max}(M_{1,\s_1})\lambda_{\max} (M_{2,\s_2})-(e^{\epsilon_1}+\eta_1-1)(e^{\epsilon_2}+\eta_2-1)\lambda_{\min}(M_{1,\s_1})\lambda_{\min}( M_{2,\s_2})\\
          \leq&\eta_{1}\eta_2\lambda_{\max}(M_{1,\s_1})\lambda_{\max} (M_{2,\s_2})-(\eta_1\lambda_{\max}(M_{1,\s_1})-\delta_1)(\eta_2\lambda_{\max}(M_{2,\s_2})-\delta_2)\\
          =&\eta_{1}\lambda_{\max}(M_{1,\s_1})\delta_2+\eta_{2}\lambda_{\max}(M_{2,\s_2})\delta_1-\delta_1\delta_2\\
          \leq &\delta_1+\delta_2.
        \end{aligned}
    \end{equation*}
   The first inequality results from $(e^{\epsilon_1+\epsilon_2}+\eta_1\eta_2-1)\geq (e^{\epsilon_1}+\eta_1-1)(e^{\epsilon_2}+\eta_2-1)$, the second one  follows from Theorem~\ref{Thm:verifier}, and the last one comes from the fact that $0\leq \eta_1,\eta_2,\lambda_{\max}(M_{1,\s_1}),\lambda_{\max}(M_{2,\s_2})\leq 1$.
\end{proof}

\section{The Proof of Theorem \ref{Thm:generator}}
\begin{proof}
    It follows from the proof of Theorem~\ref{Thm:verifier} that $\delta_{\s}$ defined by Eq.~(\ref{eq:deltaS}) can be reached by the following pair of quantum states:
    \[(\eta\ketbra{\psi}{\psi}+(1-\eta)\ketbra{\phi}{\phi}),\ketbra{\phi}{\phi})\]
    for any normalized eigenvectors $\ket{\psi}$ and $\ket{\phi}$ of $M_\s$ corresponding to the maximum and minimum eigenvalues, respectively.
\end{proof}

\end{document}